\documentclass[runningheads]{llncs}

\let\OrigSpnewtheorem\spnewtheorem

\title{Streaming algorithms for computing coresets and \texorpdfstring{$k$}{k}-median clustering in the Hamming space}
\titlerunning{Streaming Coresets and $k$-Median Clustering in Hamming Space}

\usepackage[all]{nowidow}
\usepackage{bbold}
\usepackage{lmodern}
\usepackage{xspace}
\usepackage[T1]{fontenc}
\usepackage{amsmath}
\usepackage{amssymb}

\usepackage{amsthm}

\spnewtheorem*{proof}{Proof}{\itshape}{\rmfamily}

\usepackage[colorlinks=true,linkcolor=blue,citecolor=blue]{hyperref}
\usepackage[nameinlink,capitalise]{cleveref}

\usepackage{nicefrac}
\usepackage{tikz}
\usetikzlibrary{arrows.meta,positioning,shapes.geometric}

\usepackage[boxruled,linesnumbered]{algorithm2e}
\usepackage[noadjust,sort]{cite}

\usepackage{thmtools}
\usepackage{thm-restate}
\usepackage[noend]{algpseudocode}
\usepackage{todonotes}

\newcommand{\absolute}[1]{\lvert#1\rvert}
\newcommand{\ceil}[1]{\lceil#1\rceil}
\newcommand{\Absolute}[1]{\left\lvert#1\right\rvert}
\newcommand{\hd}{\texttt{\textup{hd}}}

\DeclareMathOperator*{\argmin}{arg\,min}
\DeclareMathOperator*{\polylog}{polylog}
\DeclareMathOperator*{\poly}{poly}
\newcommand{\R}{\mathbb{R}}
\newcommand{\eps}{\varepsilon}
\newcommand{\jl}{\mathrm{JL}}
\newcommand{\weight}{w}
\newcommand{\X}{\mathcal{X}}
\newcommand{\OPT}{\mathtt{OPT}}

\newcommand{\U}{U}
\newcommand{\Oh}{\mathcal{O}}
\newcommand{\eucl}{\mathbb{d}_2}
\newcommand{\euclsq}{\eucl^2}
\newcommand{\tOh}{\tilde{\mathcal{O}}}

\OrigSpnewtheorem{fact}[theorem]{Fact}{\bfseries}{\itshape}
\crefname{fact}{Fact}{Facts}
\Crefname{fact}{Fact}{Facts}
\crefname{algocf}{Algorithm}{Algorithms}
\Crefname{algocf}{Algorithm}{Algorithms}

\author{Taha El Ghazi\inst{1} \and
Jonas Ellert\inst{2} \and
Chien-Chung Huang\inst{1} \and
Tatiana Starikovskaya\inst{1}}

\authorrunning{T. El Ghazi et al.}

\institute{CNRS \& DIENS, École normale supérieure de Paris, PSL Research University, Paris, France\\
\email{\{taha.elghazi,chien-chung.huang,tatiana.starikovskaya\}@ens.fr}\\
Department of Mathematics and Computer Science (IMADA), University of Southern Denmark, Odense, Denmark\\
\email{ellert@imada.sdu.dk}}

\begin{document}
	
\maketitle
\begin{abstract}
Clustering is one of the most fundamental tools in data analysis, allowing large datasets to be summarized by a small number of representative points. Given a metric space $(\X, \mathbb{d})$ and a set $S$ of $n$ points in this space, the continuous $k$-median clustering problem asks to find a set $C$ of $k$ points that minimizes the objective function $\sum_{s\in S} \mathbb{d}(s,C)$. When $\X = \Sigma^\ell$ is the set of strings of length $\ell$ and $\mathbb{d}$ is the Hamming distance, the continuous $k$-median clustering problem is known to be W[1]-hard when parameterized by~$k$. In this work, we present the first $(1+\eps)$-approximation algorithm for this problem with FPT runtime $2^{\poly(\eps^{-1},k)} \cdot n\ell \polylog n$. An additional feature of the algorithm is that it can be implemented in streaming, requiring only $\tOh_\eps(\ell k + k^2)$ space. As an auxiliary tool of independent interest, we show the first streaming algorithm for computing an $\eps$-coreset for continuous $k$-median clustering under the Hamming distance.

\keywords{Streaming Algorithms \and Coresets \and $k$-Median Clustering \and Hamming Distance \and Parameterized Approximation Algorithms}
\end{abstract}

\section{Introduction}
Clustering is one of the most fundamental tools in data analysis, allowing large datasets to be summarized by a small number of representative points.
As an optimization problem, it has been extensively studied in Euclidean and general metric spaces from the viewpoints of approximation algorithms, parameterized complexity, and streaming, where the points arrive one-by-one and one must account for all the space used. Given a metric space $(\X, \mathbb{d})$ and a set $X$ of $n$ points in this space, the continuous  $(k,z)$-clustering problem asks to find a set $C$ of $k$ points (``centers'') that minimizes the objective function $\sum_{x\in X} (\mathbb{d}(x,C))^z$. When $z=1$, the $(k,z)$-clustering problem is called the continuous \emph{$k$-median clustering} problem, and when $z=2$, it is called the continuous \emph{$k$-means clustering} problem. Below, we omit ``continuous'' for brevity.

For $L_p$ and general metric spaces, the complexity of this problem is rather well understood (see, e.g.,~\cite{DBLP:journals/siamcomp/Chen09,10.1145/3717823.3718299} and references therein). 
In contrast, for classical string metrics such as Hamming, edit, and Ulam distances, the algorithmic landscape remains far less developed. String distances, and in particular the edit distance, are often used as similarity measures in bioinformatics, where clustering biological sequences is an increasingly important task. For instance, clustering is employed to reduce redundancy, bin sequencing reads, and group protein sequences into homologous families (see, e.g.,~\cite{10.1093/bib/bbs035,10.1145/3166072.3166076,10.1093/bib/bby090}).

Recall that the edit distance between two strings is the smallest number of character insertions, deletions, and replacements that are required to transform one string into the other. For $k = 1$, clustering under edit distance is also known as the median string problem~\cite{KOHONEN1985309}, and even this special variant of the problem is known to be NP-complete~\cite{DELAHIGUERA200039}, and remains so even on finite alphabets~\cite{10.1007/3-540-44888-8_23}. This hardness result does not exclude efficient approximation algorithms, but, to the best of our knowledge, none are known. 

This motivated Chakraborty, Das, and Krauthgamer~\cite{doi:10.1137/1.9781611976465.48} to study the Ulam distance, a special variant of the edit distance. 
The Ulam distance is defined on permutations over $\{1, \dots, \ell\}$ as the smallest number of character-move operations required to transform one string into another. 
While solving the \mbox{$1$-median} clustering problem under the Ulam distance exactly is NP-hard \cite{DBLP:conf/esa/FischerGH025}, 
\cite{doi:10.1137/1.9781611976465.48} showed a $(2-\delta)$-approximation algorithm for $1$-median clustering under the Ulam distance, for small constant $\delta$. Later, Chakraborty, Das and Krauthgamer \cite{DBLP:conf/innovations/Chakraborty0K23} gave a randomized $1.9999995$-approximation algorithm in streaming for the $k$-median clustering problem, using $k^2 \ell \polylog(n\ell)$ space and $(k \log (n\ell))^{O(k)} \ell^3$ update time (which is not FPT in $k$). Jaiswal, Kumar, and Yadav~\cite{DBLP:journals/corr/abs-2502-07653} then gave the first \emph{offline} $(2-\delta)$-approximation algorithm for the $k$-median clustering problem under the Ulam distance, in FPT time $\Oh((2k)^k n \ell)$.

Another common stepping stone for the edit distance is the Hamming distance (see, e.g.,~\cite{DBLP:conf/icalp/Bhattacharya023} who reduced the problem of streaming pattern matching under the edit distance to that under Hamming distance, or \cite{9317938}, where a framework for pattern matching under Hamming distance was extended to that under the edit distance). The Hamming distance between two equal-length strings is defined as the number of mismatches between them. The $k$-median problem under Hamming distance is already computationally hard. Under the Unique Games Conjecture, it is NP-hard to approximate within a factor better than $1.21$~\cite{cohenaddad:hal-02360762}. Moreover, 
W[1]-hardness with respect to $k$~\cite{DBLP:conf/fsttcs/FominGS19} suggests that an exact FPT algorithm is unlikely. 
In the offline setting, the best known result for the Hamming distance is by Ostrovsky and Rabani~\cite{DBLP:conf/focs/OstrovskyR00}, who gave a $(1+\eps)$-approximation algorithm in $\ell^{k^2} n^{\Oh(\eps^{-1} k^2)}$ time, but no $(1+\eps)$-approximation algorithm with running time $f(k,\eps)\cdot (n\ell)^{\Oh(1)}$ was previously known. Such approximation algorithms are referred to as \emph{efficient parameterized approximation schemes (EPAS)}~\cite{10353074,10.1093/comjnl/bxm048}. 

\paragraph*{Related Work.} In the related (yet quite different in nature) consensus string problem, the goal is to find a string $A$ that minimizes the objective function $\max_{B\in X} \mathbb{d}(B,A)$. This NP-hard problem has been studied for both the Hamming and the edit distance, and in particular in the parameterized setting. We refer the reader to a survey~\cite{DBLP:journals/eatcs/BulteauHKN14}, see also~\cite{bulteau_et_al:LIPIcs.MFCS.2018.1,GASIENIEC2004289,10.1007/978-3-319-07566-2_1,DBLP:journals/algorithmica/AmirPR16,10.1007/978-3-642-13509-5_28,10.1007/978-3-032-05228-5_12,AMIR2013371,10.1007/978-3-642-03784-9_23}. However, the techniques developed for this problem do not seem to extend to $(k,z)$-clustering directly.

\subsection{Results \& Overview}
In this work, we present the first EPAS for the $k$-median clustering problem under the Hamming distance. Using standard alphabet reduction techniques, one can reduce the problem over an arbitrary alphabet $\Sigma$ to the binary alphabet $\{0, 1\}$. For this alphabet, the squared $L_2$-distance coincides with the Hamming distance, i.e., $k$-means clustering after the reduction coincides with $k$-median clustering before the reduction. Hence, one can construct an $\eps$-coreset of size independent of $n$ by applying the recent offline algorithm of Huang and Vishnoi~\cite{HuangV20}. Intuitively, an $\eps$-coreset is a \emph{weighted} set of points that can be used to find a $(1+\eps)$-approximation of the minimum of the objective function (see \cref{def:coreset} for a precise definition). Brute-forcing over all possible partitions of the coreset into $k$ subsets, we obtain an EPAS. 

However, modern sequencing platforms generate biological data at rates that often exceed practical storage capacity, making streaming algorithms a natural fit. Motivated by this, we develop an EPAS that works in the streaming model. Recall that in this model, we assume that the strings arrive as a stream, one string at a time, and we account for all the space used, including the space needed to store any information about the input. Upon the arrival of each new string, the algorithm updates its state and outputs a $(1+\eps)$-approximation solution for $k$-median clustering for the current set of strings. This requirement of producing an output after every update is natural in bioinformatics applications where the databases are constantly updated, but stronger than the standard formulation of streaming clustering, where the solution is only required after processing the entire input (see, e.g.,~\cite{10.1007/s10994-024-06540-z}). Finally, we require each reported solution to be correct with high probability, ensuring an overall constant success probability.

The framework of our streaming algorithm follows that of the offline one: for each new point, we construct a small coreset, and then solve the problem on the coreset by brute-force. We note that this framework is by now standard and has in particular been used for streaming clustering algorithms before, see, e.g.,~\cite{10.1007/s10994-024-06540-z}. Our contribution lies in the implementation details specific to the Hamming distance. First, similarly to the offline EPAS above, we apply standard alphabet and then dimension reductions. 
On the reduced strings, we construct the \mbox{$\eps$-coreset} by applying a streaming algorithm for $k$-means in the $L_2$-space. (The current best algorithm with fully analyzed update time is that of Braverman et al.~\cite{braverman_et_al:LIPIcs.APPROX-RANDOM.2019.62}.\footnote{The more recent algorithm of Cohen-Addad et al.~\cite{10353181} is more space-efficient, but they do not analyze the update time.}) This already results in a relatively small coreset, but its size linearly depends on $\log n$, which intuitively is a critical obstacle towards an EPAS. We show that the algorithm of Huang and Vishnoi~\cite{HuangV20} which was designed to work on \emph{unweighted} sets can be modified to work on weighted sets as well, and apply it to the small-size coreset constructed by the streaming algorithm to further reduce its size without increasing the time complexity. 
For a stream of $n$ strings over an integer alphabet ${\Sigma = \{1, \dots n^{\Oh(1)}\}}$ (assumed throughout the paper), we obtain:

\begin{restatable}[Simplified]{theorem}{coreset}\label{cor:small-streaming-coreset}
For continuous $k$-median clustering in $(\Sigma^\ell, \hd)$, there is a streaming algorithm that maintains an $\eps$-coreset  on a stream of $n$ points while requiring $\tOh_\eps\!\left(\ell k + k^2 \log^2 \frac{1}{\delta}\right)$ space and $\tOh_\eps\!\left(\ell + k^2 \log^2 \frac1\delta\right)$ update time.\footnote{Hereafter, $\tOh_\eps$ hides $\eps^{-\Oh(1)} \polylog n$ factors.} The coreset is a subset of the input points of size $\Oh\!\left(\eps^{-6}k \log k \log \frac{k}{\eps \delta} \right)$ and the algorithm succeeds at any point of the stream with probability at least $1 - \delta$.
\end{restatable}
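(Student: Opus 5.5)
The plan follows the pipeline sketched in the overview, and I would prove the theorem as a composition of three reductions, each preserving $k$-median costs up to a $(1\pm\eps)$ factor.

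\textbf{Alphabet reduction.} Map each character $a\in\Sigma$ to a binary codeword so that Hamming distances are preserved up to a fixed scale. A one-hot encoding scales every mismatch by exactly $2$; to keep the dimension small I would instead use a random code, or a random hash of $\Sigma$ into a small alphabet followed by one-hot encoding. Either way, each string becomes a vector $x\in\{0,1\}^{\ell'}$ with $\hd(x,y)=\|x-y\|_2^2$. After this step, $k$-median in the Hamming space becomes $k$-means in $L_2$ restricted to the cube.

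One point needs care. The centers of continuous $k$-means may be arbitrary real points, while the Hamming problem requires binary (or $\Sigma$) centers. For a fixed partition of the points, the optimal Hamming center is the coordinatewise majority. Its cost equals $\sum_j \min(n_j^0,n_j^1)$, whereas the $k$-means cost of the mean is $\sum_j n_j^0n_j^1/n$. These two quantities can differ by a factor of up to $2$. So I would not rely on a black-box cost equivalence. Instead I would use the fact that a strong coreset for $L_2^2$ preserves the cost of \emph{every} candidate center set, and in particular every binary center set. Hence the coreset guarantee transfers directly to Hamming centers.

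\textbf{Dimension reduction.} Next apply a Johnson--Lindenstrauss-type map to dimension $\Oh(\eps^{-2}\log(k/\eps\delta))$. This preserves $k$-means costs for all partitions simultaneously, by the Makarychev--Makarychev--Razenshteyn style bound. Since we only need the coreset to be a \emph{subset} of the input, the projection is used only to choose points and weights. We keep the original strings, which costs $\Oh(\ell)$ space per stored point.

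\textbf{Two-level coreset.} On the projected points, run the streaming $k$-means coreset algorithm of Braverman et al.\ (merge-and-reduce or sensitivity-based). This yields a weighted subset of size $\tOh_\eps(k\log n)$, and each update costs $\tOh_\eps(\ell+\dots)$ time, dominated by reading and projecting the string. After each update, apply a weighted version of the Huang--Vishnoi construction to this weighted set. Their argument decomposes into rings around an approximate solution and samples uniformly inside groups; for the weighted version, sample proportionally to weight inside each ring or group, and verify that the same concentration (chaining/net) argument goes through when point multiplicities are replaced by weights. Composing coresets multiplies the errors, giving $(1+\eps)^{\Oh(1)}$; rescaling $\eps$ then yields the stated size $\Oh(\eps^{-6}k\log k\log\frac{k}{\eps\delta})$. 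The $k^2\log^2\frac1\delta$ space and time terms come from the approximate solution needed inside the Huang--Vishnoi construction, computed on the small coreset, together with its sampling. A union bound over the stages gives success probability $1-\delta$ at any fixed time.

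\textbf{Main obstacle.} I expect the weighted generalization of Huang--Vishnoi to be the hardest step. Their analysis assumes unit weights when bounding ring sizes and the variance of the uniform samples. Heavy points could break the concentration bounds, so I would first try splitting any heavy point into copies of bounded weight, arguing that this does not change costs and keeps the size bound.
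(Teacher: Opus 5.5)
\textbf{Gap: the dimension-reduction step does not lift back to Hamming centers.} Your final coreset is a weighted subset $S\subseteq X$, with points and weights chosen by looking only at the projected points. To conclude $\nu(S,C)\approx\nu(X,C)$ for every $C\subseteq\Sigma^\ell$, you must invoke the reduced-space coreset guarantee at one concrete center set and pull it back in both directions. That needs a \emph{per-point} comparison: for every $x\in X$ (the weights on $S$ are arbitrary) and every $c\in\Sigma^\ell$, the squared distance between the images of $x$ and $c$ must be within $1\pm\Oh(\eps)$ of $\hd(x,c)$.

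The Makarychev--Makarychev--Razenshteyn theorem does not provide this. It preserves the cost of each partition of $X$ \emph{with its own optimal centers}. For $k$-means this cost is $\sum_i\frac{1}{2|P_i|}\sum_{x,y\in P_i}\euclsq(x,y)$, a function of pairwise distances inside $X$, which is why an oblivious JL map can preserve it. The cost with respect to an arbitrary center set is not of this form, and a union bound over all $|\Sigma|^\ell$ candidate centers is out of reach in $\Oh(\eps^{-2}\log(k/\eps\delta))$ dimensions. This is exactly the black-box equivalence you rightly rejected a paragraph earlier: by your own computation, centroid-optimal partition costs differ from Hamming $k$-median costs by up to a factor $2$. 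After projection, ``every binary center set'' is no longer something the coreset guarantee speaks about.

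What is needed, and what the paper uses, is a \emph{terminal} embedding (Narayanan--Nelson: JL on the terminals plus one extra coordinate, target dimension $\Oh(\eps^{-2}\log n)$). It preserves $\eucl(x,y)$ for every terminal $x\in\kappa(X)$ and \emph{every} $y$. Combined with Karloff's embedding, it gives $(1-\eps)^2\hd(x,y)\le\euclsq(\mu(x),\mu(y))\le(1+\eps)^3\hd(x,y)$ for all $x\in X$ and $y\in\Sigma^\ell$. The $k$-means guarantee for $Q$ at the never-computed center set $\mu(C)$ then transfers to $S$ in both directions. Having $\log n$ rather than $\log(k/\eps\delta)$ in the dimension is absorbed by $\tOh_\eps$.

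\textbf{The weighted Huang--Vishnoi step.} You leave this as an unverified re-run of the concentration argument; the paper avoids re-analysis entirely. Only the first (sensitivity-sampling) round needs changing. The paper rounds the rational weights from Braverman et al.\ to integers at an $\Oh(\eps)$ loss. It then expands each $x$ into $w(x)$ \emph{distinct} copies, perturbed along fresh orthogonal coordinates by $\sqrt{\eps/(6W)}\,\eta$. Since $|X|>k$ forces two points into one cluster, every cost is at least $\eta^2/2$, so the perturbation changes costs only by an $\Oh(\eps)$ factor. The unweighted algorithm is then used as a black box and its output projected back.

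Your ``split heavy points into copies'' is this idea, but you are missing the efficiency argument: $W=n^{\Oh(1)}$, so the copies must never be materialized within the stated update time. The paper simulates the first round on the compressed input, using three facts: all copies of $x$ share one sensitivity; a constant-factor solution for $X$ is a $3\gamma$-approximation for the expanded set; and inverse-transform sampling runs in $\Oh(n+N_1\log n)$ time.

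\textbf{Alphabet reduction.} A single random hash of $\Sigma$ into a small alphabet followed by one-hot encoding cannot work. Some pair of characters collides, and since centers range over all of $\Sigma^\ell$, the $(1\pm\eps)$ guarantee is needed for \emph{every} pair. A random code works only if it is stored compactly for $|\Sigma|=n^{\Oh(1)}$, e.g., as a random linear map on the binary representation of characters. That is essentially Karloff's $A=B\cdot L$, which the paper uses.
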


\def\coresetfullversion{
\begin{restatable*}[Full version]{theorem}{coreset}\label{cor:small-streaming-coreset}

Let $\ell,k,n \in \mathbb{Z}_+$ with $n \geq k$ and $\eps, \delta \in (n^{-\Oh(1)}, 1/2)$.
For continuous $k$-median clustering in $(\Sigma^\ell, \hd)$, there is a streaming algorithm that maintains an $\eps$-coreset on a stream of $n$ points while requiring $\tOh_\eps\!\left(\ell k + k^2 \log^2 \frac{1}{\delta}\right)$ space and $\tOh_\eps\!\left(\ell + k^2 \log^2 \frac1\delta\right)$ update time.\footnote{Recall that $\tOh_\eps$ hides $\eps^{-\Oh(1)} \polylog n$ factors.} The coreset is a subset of the input points of size $\Oh\!\left(\eps^{-2}k \log k \cdot \min(\log n, \eps^{-4}\log \frac{k}{\eps\delta}) \right)\!$, and the algorithm succeeds at every point of the stream with probability at least $1 - \delta$.
\end{restatable*}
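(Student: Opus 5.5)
The plan is to reduce the problem to streaming $k$-means coresets in Euclidean space and then shrink the coreset offline after every update, as sketched in the overview.

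\textbf{Step 1: alphabet reduction.} Map each character of $\Sigma$ to a binary codeword so that Hamming distance is preserved up to a fixed scaling. The one-hot encoding of length $|\Sigma|$ satisfies $\hd(\phi(a),\phi(b)) = 2[a\neq b]$, but it is too long. Instead I would use a random hash $\Sigma \to \{0,1\}^{\Oh(\eps^{-2}\log n)}$ (a code with near-equidistant codewords) applied per position. This turns strings of length $\ell$ into binary vectors of length $\ell' = \tOh_\eps(\ell)$. On binary vectors, $\euclsq(x,y)=\hd(x,y)$, so the $k$-median cost before the reduction is, up to a $(1\pm\eps)$ factor and scaling, the $k$-means cost after it.

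One subtlety is that optimal centers for $k$-means on the encoded points are fractional and need not be encodings of strings. I would handle this by noting that the coreset guarantee for the Euclidean instance holds for all centers in $\R^{\ell'}$. In particular it holds for the images of string centers, and that is all the Hamming coreset definition requires.

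\textbf{Step 2: dimension reduction.} Apply a Johnson--Lindenstrauss-type map to dimension $\tOh_\eps(1)$ that preserves the $k$-means cost of every clustering. Following the known terminal/clustering-preserving JL results, a dimension of $\Oh(\eps^{-2}\log(k/\eps))$ suffices. This keeps the update cost per point at $\tOh_\eps(\ell)$, namely the cost of computing the sketch.

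\textbf{Step 3: streaming coreset.} Run the merge-and-reduce (or sensitivity-sampling) streaming coreset of Braverman et al. on the projected points. It keeps, for each sampled point, the index of the original string, so the output is a weighted subset of the input. Storing the original strings of the $\Oh_\eps(k\polylog n)$ samples costs $\tOh_\eps(\ell k)$ space. The union bound over all $n$ stream prefixes is paid by taking failure probability $\delta/n$ inside, which only affects polylogarithmic factors.

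\textbf{Step 4: post-processing with Huang--Vishnoi.} After each update, apply a weighted variant of the Huang--Vishnoi construction to the current coreset, which has size $m=\tOh_\eps(k)$. This yields size $\Oh(\eps^{-6}k\log k\log\frac{k}{\eps\delta})$; the bound in the statement is the minimum with the Step 3 size. Coresets compose, so an $\eps$-coreset of an $\eps$-coreset is an $\Oh(\eps)$-coreset, and rescaling $\eps$ fixes the constant. The running time on $m$ points in dimension $\tOh_\eps(1)$ is roughly $m\cdot k\cdot$ (a bicriteria approximation) plus the sampling. This gives the $k^2\log^2\frac1\delta$ term when the failure probability per step is boosted.

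\textbf{Main obstacle.} I expect the hardest step to be extending Huang--Vishnoi to weighted inputs. I would first try to reduce to the unweighted case by splitting each weighted point into copies in proportion to its weight, after rounding weights to multiples of $\eps\cdot w_{\min}$-scale units. The analysis only uses sensitivities and ring/group decompositions, which are linear in multiplicities, so I would argue directly that the sampling probabilities can be computed from weights without materializing copies. I would then check that the group-sampling concentration bounds depend on total weight only through ratios.
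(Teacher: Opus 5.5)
Your pipeline has the same architecture as the paper's: binary alphabet reduction, dimension reduction, the Braverman et al.\ streaming coreset with the original strings of retained points stored, then a weighted Huang--Vishnoi pass. That pass rounds weights to integer multiples of $\eps w_{\min}$ and computes sensitivities from the weights. However, Step~2 fails as stated.

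To pull the Euclidean coreset back to a Hamming coreset you need, for \emph{every} $C\subseteq\Sigma^\ell$ with $|C|\le k$, that $\sum_{x} w(x)\min_{c\in C}\euclsq(\mu(x),\mu(c))$ is within $1\pm\Oh(\eps)$ of $\nu(X,C)$. This must hold both for $X$ and for the weighted subset $S$. In other words, the distance from every data point to every possible center image must be preserved.

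The clustering-preserving JL bound of $\Oh(\eps^{-2}\log(k/\eps))$ dimensions (Makarychev--Makarychev--Razenshteyn) does not give this. It only says that the cost of each \emph{partition}, measured with that partition's own optimal centers in each space, is preserved. It says nothing about the cost under a fixed non-optimal center set. Since $S$ is selected after the projection is fixed, you cannot recover this by a union bound.

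A plain linear JL map into $\Oh(\eps^{-2}\log n)$ dimensions does not suffice either. It only controls distances within a fixed polynomial-size set, whereas the centers range over all of $\Sigma^\ell$.

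What works, and what the paper uses, is a terminal embedding (Narayanan--Nelson) with the $n$ stream points as terminals and target dimension $\Oh(\eps^{-2}\log n)$. The $\log(k/\eps)$ bound does not apply to terminal embeddings. On terminals the embedding is just a JL map followed by an appended zero coordinate, so it can be computed in streaming. Its nonlinear extension to the centers is needed only in the analysis. This is also where it is essential that the coreset is a subset of the input: every point of $S$ must be a terminal. Since $\tOh_\eps$ hides $\polylog n$ factors, the fix does not change the claimed bounds.

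There are also two smaller issues.
\begin{itemize}
\item In Step~1, a fully random hash $\Sigma\to\{0,1\}^r$ needs $|\Sigma|\cdot r=n^{\Oh(1)}$ space. You need a succinct family such as a random linear code $a\mapsto B\cdot\mathrm{bin}(a)\bmod 2$. This is essentially Karloff's balanced-matrix construction, which the paper uses deterministically with $r=\Oh(\eps^{-2}\log^2 n)$.
\item In Step~4, rounding to multiples of $\eps w_{\min}$ makes the total integer weight $W$ scale with $w_{\max}/w_{\min}$, and the constant-factor approximation costs a $\log W$ factor. The paper controls this by noting that the Braverman et al.\ weights are ratios of integers of size $n^{\Oh(1)}$.
\end{itemize}

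For the weighted Huang--Vishnoi step itself, the paper avoids re-examining the concentration bounds. It expands each point into $w(x)$ \emph{distinct} copies via tiny orthogonal perturbations of length $\eta\sqrt{\eps/(6W)}$, where $\eta$ is a lower bound on the minimum distance. It then shows two things:
\begin{itemize}
\item projecting a coreset of the expanded set back yields an $\Oh(\eps)$-coreset;
\item the weighted first round samples exactly as Huang--Vishnoi would on the expanded set with the same randomness.
\end{itemize}
Your multiset variant is plausible, but it requires checking that their analysis tolerates duplicate points, which the perturbation trick sidesteps.
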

}

Since an $\eps$-coreset approximates the objective function on the original ienput, one can apply any clustering algorithm to the coreset to obtain a $(1+\eps)$-approximation solution for the $k$-median problem on the full instance. 
However, to obtain an EPAS by brute-force, the coreset size must be independent of $n$. 
In this regime, the construction from \cref{cor:small-streaming-coreset} fails only with constant probability~$\delta$, and consequently, the resulting clustering is correct only with constant probability. Unlike standard randomized approximation schemes, this failure probability cannot be reduced by performing independent repetitions and selecting the best solution: 
the output is a set of centers rather than a single numerical estimate, and there is no efficient procedure to certify whether a given set of centers achieves a $(1+\eps)$-approximation without essentially solving the problem itself. 

To overcome this obstacle, we use a \emph{two-level coreset amplification} technique. 
Since the coresets in \cref{cor:small-streaming-coreset} are subsets of the input (a crucial feature), we can verify our solution as follows: We generate one ``large'' coreset of size $\poly(\eps^{-1}, k)\cdot \log n$, succeeding with high probability\footnote{Throughout the paper, with high probability, or w.h.p., means probability $1-n^{-c}$ for a constant $c \ge 1$ that can be chosen freely.}, and $\Theta(\log n)$ ``small'' coresets of size $\poly(\eps^{-1}, k)$, each succeeding with constant (non-zero) probability.
By a standard argument, at least one of the small coreset constructions succeeds with high probability. For each small coreset, we can construct a candidate set of centers in a brute-force manner efficiently, while the large coreset plays the role of ``certificate'', allowing us to identify a small coreset for which the construction succeeded.
This leads to the main result of the paper:
\def\kmedianHamfull{
\begin{restatable*}[Full version]{theorem}{kmedianHam}
\label{th:kmedianHam}
Let $\ell, k, n \in \mathbb{Z}_+$ with $n \geq k$, and let $\eps, \delta \in (n^{-\Oh(1)}, 1/2)$. There is a streaming algorithm that maintains a $(1+\eps)$-approximate solution for $k$-median clustering on a stream of $n$ points in $(\Sigma^\ell, \hd)$, while requiring $\tOh_\eps(\ell k + k^2)$ space and $\tOh_\eps(\ell \cdot 2^{\poly(\eps^{-1},k)})$ update time. It succeeds, w.h.p., at every point of the stream.
\end{restatable*}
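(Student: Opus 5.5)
We build \cref{th:kmedianHam} on \cref{cor:small-streaming-coreset}, using the two-level amplification described above. In parallel we run one instance of \cref{cor:small-streaming-coreset} with failure probability $\delta_0 = n^{-c}$, which we call the \emph{large} coreset $L$. Its size is $\Oh(\eps^{-2} k\log k\log n)$, using the $\log n$ branch of the minimum. We also run $t=\Theta(\log n)$ independent instances with constant failure probability $\delta=1/4$, which we call the \emph{small} coresets $S_1,\dots,S_t$. Each has size $m=\Oh(\eps^{-6}k\log k\log\frac{k}{\eps})=\poly(\eps^{-1},k)$. All instances use accuracy parameter $\eps'=\Theta(\eps)$. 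The space bound is $\Oh(\log n)$ times the bound of \cref{cor:small-streaming-coreset}, namely $\tOh_\eps(\ell k+k^2)$, because $\log\frac1\delta=\Oh(\log n)$ is absorbed into the polylog. The update time of the coreset maintenance is $\tOh_\eps(\ell+k^2)$ per instance. A union bound over the $n$ prefixes, adjusting $c$, shows that $L$ is an $\eps'$-coreset of every prefix w.h.p.

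After each update we compute a candidate solution from each small coreset $S_i$ by brute force. We enumerate all partitions of $S_i$ into at most $k$ parts, which is $k^{m}=2^{\poly(\eps^{-1},k)}$ options. For each part we compute an optimal weighted $1$-median in the Hamming space. This is done columnwise: at every position we take the character of maximum total weight. Each part costs $\Oh(m\ell)$ time, with hashing or sorting over the integer alphabet. The optimal solution $C^*_i$ for the weighted instance $S_i$ is the best candidate over all partitions. Continuous optimality is exact because the Hamming $1$-median decomposes coordinatewise. Next we evaluate every $C^*_i$ on the large coreset $L$ in time $\Oh(|L|k\ell)$, which is $\tOh_\eps(\ell k^2)$. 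We output the $C^*_i$ with minimum $\mathrm{cost}_L$. The total update time is $\Oh(\log n)\cdot 2^{\poly(\eps^{-1},k)}\cdot\ell$, which matches the claim.

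For correctness, fix a prefix $X$. The small coresets are independent, so some $S_j$ is an $\eps'$-coreset of $X$ with probability at least $1-4^{-t}=1-n^{-c'}$. Take a union bound over prefixes, together with the event for $L$. Let $C^{\OPT}$ be an optimal solution for $X$. Since $C^*_j$ is optimal on $S_j$, we have $\mathrm{cost}_X(C^*_j)\le\frac{1}{1-\eps'}\mathrm{cost}_{S_j}(C^*_j)\le\frac{1}{1-\eps'}\mathrm{cost}_{S_j}(C^{\OPT})\le\frac{1+\eps'}{1-\eps'}\OPT$. Let $\hat C$ be the output. Because $L$ is a coreset, $\mathrm{cost}_X(\hat C)\le\frac{1}{1-\eps'}\mathrm{cost}_L(\hat C)\le\frac{1}{1-\eps'}\mathrm{cost}_L(C^*_j)\le\frac{1+\eps'}{1-\eps'}\mathrm{cost}_X(C^*_j)$. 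Combining the two chains, the output costs at most $(\frac{1+\eps'}{1-\eps'})^2\OPT\le(1+\eps)\OPT$ for $\eps'=\eps/10$.

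The subtle point, and the main obstacle, is whether $L$ may be used as a certificate even though the candidates $C^*_i$ depend on the randomness of the stream. The coreset guarantee must hold simultaneously for \emph{all} center sets $C\subseteq\Sigma^\ell$ (as in \cref{def:coreset}), not just for a fixed $C$. The argument also needs the randomness of $L$ to be independent of that of the $S_i$. Both conditions hold, because the instances use independent random bits and the coreset property is uniform over centers. A second worry concerns the original alphabet: \cref{cor:small-streaming-coreset} internally reduces the alphabet and the dimension, so its guarantee must be stated for the original Hamming strings. This is why it matters that the coresets are subsets of the input points, since then the brute force and the evaluation can both run on the original strings of length $\ell$.
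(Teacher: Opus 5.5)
Your proposal is correct and follows essentially the same route as the paper. The paper proves the generic two-level amplification in \cref{claim:oneplusepskmeans}: one coreset with failure probability $n^{-c}$ serves as a certificate, and $\Theta(\log n)$ constant-probability coresets are solved by brute-force partition enumeration with a coordinatewise weighted majority. It then instantiates this with \cref{cor:small-streaming-coreset}. The remaining differences are cosmetic. Your correctness chain compares $C^*_j$ directly against the optimum of $X$ rather than routing through $\OPT(U)$, which gives the slightly tighter factor $\bigl(\frac{1+\eps'}{1-\eps'}\bigr)^2$ instead of $\bigl(\frac{1+\eps}{1-\eps}\bigr)^3$. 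Also, the independence of $L$ from the $S_i$ that you invoke is not actually needed: the strong coreset guarantee for $L$ holds for all center sets, so a union bound suffices.
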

}

\begin{restatable}[Simplified]{theorem}{kmedianHam}
\label{th:kmedianHam}
There is a streaming algorithm that maintains a $(1+\eps)$-approximate solution for $k$-median clustering on a stream of $n$ points in $(\Sigma^\ell, \hd)$, while requiring $\tOh_\eps(\ell k + k^2)$ space and $\tOh_\eps(\ell \cdot 2^{\poly(\eps^{-1},k)})$ update time. It succeeds, w.h.p., at every point of the stream.
\end{restatable}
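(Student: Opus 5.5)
We prove \cref{th:kmedianHam} by the two-level coreset amplification outlined above, using \cref{cor:small-streaming-coreset} as a black box. We run $1+t$ independent copies of the coreset algorithm in parallel, with $t=\Theta(\log n)$. The first copy is run with $\delta = n^{-c}$ and accuracy $\eps' = \Theta(\eps)$, and maintains a \emph{large} coreset $L$. With high probability $L$ is an $\eps'$-coreset at every point of the stream, and its size is $\Oh(\eps^{-6} k\log k \log\frac{kn}{\eps})=\poly(\eps^{-1},k)\log n$. The other $t$ copies are run with constant $\delta=1/4$ and accuracy $\eps'$, and maintain \emph{small} coresets $S_1,\dots,S_t$, each of size $\poly(\eps^{-1},k)$ independent of $n$. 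Space adds up to $t\cdot\tOh_\eps(\ell k+k^2)=\tOh_\eps(\ell k+k^2)$, and the coreset update time adds up to $\tOh_\eps(\ell+k^2)$.

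After each update we compute a candidate solution $C_i$ for every $S_i$ by brute force. We enumerate all partitions of the weighted multiset $S_i$ into $k$ parts, $k^{|S_i|}=2^{\poly(\eps^{-1},k)}$ of them. For each part we compute an optimal continuous $1$-median under Hamming distance exactly: this is the coordinatewise weighted majority, computable in $\Oh(|S_i|\ell)$ time. Taking the best partition under the weighted cost gives an optimal $k$-median $C_i$ for $S_i$; the optimum over centers is realised by some partition, so enumeration is exact. We then evaluate $\mathrm{cost}_L(C_i)$ for each $i$ in $\Oh(|L|k\ell)$ time and output $C^\ast=C_{i^\ast}$ minimizing this value. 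The per-update time is $t\cdot 2^{\poly(\eps^{-1},k)}\cdot \ell\cdot \poly(\eps^{-1},k)\log n=\tOh_\eps(\ell\cdot 2^{\poly(\eps^{-1},k)})$. Because the coresets are subsets of the input points, storing them takes $\Oh(\ell)$ per point, which is already accounted for.

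For correctness, fix a time step with current input $X$. The small coresets are independent and each fails with probability at most $1/4$ at that step, so the probability that all of them fail is $4^{-t}\le n^{-c-1}$. A union bound over the $n$ time steps and the event that $L$ fails gives, w.h.p., that at every step $L$ is an $\eps'$-coreset and some $S_j$ is an $\eps'$-coreset. Then for the optimal $C_{\OPT}$ of $X$ we have
\[
\mathrm{cost}_X(C^\ast)\le \tfrac{1}{1-\eps'}\mathrm{cost}_L(C^\ast)\le\tfrac{1}{1-\eps'}\mathrm{cost}_L(C_j)\le\tfrac{1+\eps'}{1-\eps'}\mathrm{cost}_X(C_j),
\]
and
\[
\mathrm{cost}_X(C_j)\le \tfrac{1}{1-\eps'}\mathrm{cost}_{S_j}(C_j)\le \tfrac{1}{1-\eps'}\mathrm{cost}_{S_j}(C_{\OPT})\le\tfrac{1+\eps'}{1-\eps'}\OPT .
\]
Combining the two chains yields a $(1+\Oh(\eps'))$-approximation, and we choose $\eps'=\eps/10$.

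The main obstacle is making this union bound legitimate. The coreset guarantee of \cref{cor:small-streaming-coreset} must hold for \emph{all} center sets simultaneously. This matters because $C^\ast$ and $C_j$ are chosen depending on the randomness of $L$ and $S_j$. I would verify that the coreset definition (\cref{def:coreset}) is indeed the "for all $C$" guarantee. I would also check that the "every point of the stream" guarantee for the large coreset is obtained with $\delta=n^{-c}$, and that the per-step failure probability of the small coresets is at most $1/4$. A second subtle point is that the brute force must search over continuous centers in $\Sigma^\ell$ (not only over centers among coreset points) to obtain a $(1+\eps)$-approximation. The majority-per-coordinate characterization of the Hamming $1$-median handles this, and it works directly over the original alphabet without alphabet reduction.
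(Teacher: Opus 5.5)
Your proposal is correct and follows the paper's proof essentially step for step. The paper also keeps one high-probability ``certificate'' coreset and $\Theta(\log n)$ constant-failure-probability small coresets, brute-forces each small coreset over partitions with coordinatewise weighted majority, selects by cost on the large coreset, and union-bounds over the $n$ time steps; the only cosmetic difference is that your chain compares $C_j$ directly with $\OPT(X)$ through $S_j$, giving $\bigl(\frac{1+\eps'}{1-\eps'}\bigr)^2$, whereas the paper routes it through $\OPT(U)$ of the large coreset and obtains $\frac{(1+\eps)^3}{(1-\eps)^3}$.
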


As a remark, while in this work we focused on $k$-median clustering, that is, $(k,1)$-clustering, which arguably is the most natural formulation for the Hamming distance (compare, for example, with the consensus string problem), our framework can be applied to achieve a streaming EPAS for any constant $z$ by replacing the algorithmic tools we used with their counterparts for $(k,2z)$-clustering~\cite{braverman_et_al:LIPIcs.APPROX-RANDOM.2019.62,HuangV20}.

\section{Preliminaries}
\label{sec:prelim}
For any set $\mathcal X$, positive integers $i \leq d$, and $d$-dimensional point $x \in \mathcal X^d$, we write $x[i]$ to refer to the $i$-th coordinate of $x$.
For a real number $p \ge 1$, the $L_p$-distance between two points $x,y \in \R^d$ is defined by $\mathbb{d}_p(x,y) = (\sum_{i = 1}^{d} \absolute{x[i]-y[i]}^p )^{1/p}$. For $p = 2$, we further denote $\euclsq(x,y) = (\eucl(x,y))^2 = \sum_{i = 1}^{d} \absolute{x[i]-y[i]}^2$. The Hamming distance between two strings $x,y \in \Sigma^\ell$ is denoted as $\hd(x,y)$ and is defined as the number of positions $i \in \{1, \dots, \ell\}$ such that $x[i] \neq y[i]$.

\paragraph*{Weighted Sets.} Given a metric space $(\X, \mathbb{d})$, a \emph{weighted} set is a pair $(X,w)$ such that $X \subseteq \X$ and $w$ is a function $w:X\to (0, \infty)$. Furthermore, we define the \emph{total weight} of $X$ as $w(X) := \sum_{x \in X}w(x)$. If $w$ is identically equal to $1$, we say that the set is \emph{unweighted} and refer to it as $X$.
We say that $(X,w)$ is a subset of another weighted set $(X',w')$ if $X \subset X'$.

\paragraph*{\boldmath Clustering in $(\X,\mathbb{d})$.\unboldmath} Consider a metric space $(\X, \mathbb{d})$ and a weighted set $(X,w)$. For $Q \subseteq \X$ and $x\in X$, we denote $\mathbb{d}(x, Q) := \min_{q\in Q} \mathbb{d}(x,q)$, and $\mathbb{d}(X,Q) := \min_{x\in X}\mathbb{d}(x,Q)$. Furthermore, given a positive integer $z$, we define the \emph{$z$-clustering cost of $x$ under $Q$} to be $\nu_{z,w}(x, Q) := w(x) \cdot (\mathbb{d}(x, Q))^z$ and the \emph{$z$-clustering cost of $X$ under $Q$} to be $\nu_{z,w}(X, Q) := \sum_{x\in X} \nu_{z,w}(x, Q)$. If neither of the parameters $z$ or $w$ is ambiguous, we may refer to $\nu_{z,w}$ as simply $\nu$.

Given a positive integer $k$, the goal of the (continuous) $(k,z)$-clustering problem is to minimize the quantity $\nu_{z,w}(X,C)$, where $C$ can be any subset of $\X$ of size at most $k$. When $z=1$, the $(k,z)$-clustering problem is called the \emph{$k$-median clustering} problem, and when $z=2$, it is called the \emph{$k$-means clustering} problem.

\begin{definition}[Coreset for $(k,z)$-clustering]
	\label{def:coreset}
	For a metric space $(\X, \mathbb{d})$, let $(X,w)$ be a weighted set of $n$ points in $\X$. A weighted set $(S,u)$ is an \emph{$\eps$-coreset for $(k,z)$-clustering} if
	$\absolute{\nu_{z,u}(S,C)-\nu_{z,w}(X,C)} \le \eps \nu_{z,w}(X,C)$ 
    for all sets $C \subseteq \X$ satisfying $\absolute{C} \le k$.\footnote{Such coresets are sometimes called \emph{strong}.}	
\end{definition}

\begin{fact}[{\cite[Theorem 3 with $\rho = 2$ and $\delta = n^{-c}$ for constant $c$]{braverman_et_al:LIPIcs.APPROX-RANDOM.2019.62}}]\label{fact:means_clustering}
Let $d, k$ be positive integers, and let $\eps \in (0, 1)$.
For $k$-means clustering in $(\R^d, \eucl)$, there is a streaming algorithm that maintains an $\eps$-coreset of size $\Oh(\eps^{-2}k\log k \log n)$ on a stream of $n$ points while requiring $\Oh(\eps^{-2}dk\log k \log^2 n)$ space and update time. It succeeds, w.h.p., at every point of the stream.
\end{fact}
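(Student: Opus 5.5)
Since the statement is a cited result, the plan is to instantiate the general theorem of Braverman et al.~\cite{braverman_et_al:LIPIcs.APPROX-RANDOM.2019.62}, not to re-prove it, and to check that every parameter specializes correctly. Their Theorem~3 is stated for $(k,z)$-clustering in any metric space whose distance function obeys a $\rho$-relaxed triangle inequality $\mathbb{d}(x,y)\le \rho\,(\mathbb{d}(x,q)+\mathbb{d}(q,y))$. The steps are as follows.

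\textbf{Step 1: the relaxed triangle inequality.} For $k$-means in $(\R^d,\eucl)$ the cost of a point is $\euclsq(x,C)$. I would verify that $\euclsq$ obeys the relaxed triangle inequality with $\rho=2$. Indeed, $(a+b)^2\le 2a^2+2b^2$ gives $\euclsq(x,y)\le 2\bigl(\euclsq(x,q)+\euclsq(q,y)\bigr)$, which is the hypothesis needed to apply their theorem with $\rho=2$.

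\textbf{Step 2: plugging in the parameters.} Their guarantees for dimension $d$ and failure probability $\delta$ are:
\begin{itemize}
\item coreset size $\Oh(\eps^{-2}k\log k\log n)$, up to factors polynomial in $\rho$ and $\log\frac1\delta$ that their theorem makes explicit;
\item space and update time carrying an additional factor $d\log n$, since each stored point costs $\Oh(d)$ words and the online sensitivity sampling keeps $\Oh(\log n)$ levels of guesses of the optimal cost.
\end{itemize}
I would set $\rho=2$, which turns all $\rho$-dependent factors into constants, and $\delta=n^{-c}$, so that each $\log\frac1\delta$ becomes $\Oh(\log n)$. The resulting bounds must collapse to size $\Oh(\eps^{-2}k\log k\log n)$ and space and update time $\Oh(\eps^{-2}dk\log k\log^2 n)$. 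This requires care about exactly where their $\log\frac1\delta$ enters, and I would check it against their precise statement.

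\textbf{Step 3: success at every point of the stream.} If their theorem only guarantees correctness at a fixed time with probability $1-\delta$, I would take a union bound over the $n$ prefixes of the stream. With $\delta=n^{-(c+1)}$ the overall failure probability is at most $n^{-c}$, and this changes only constant factors. If their construction instead maintains its guarantee throughout the stream, as online sampling schemes typically do, no union bound is needed.

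The main obstacle is purely bookkeeping: confirming that the $\log\frac1\delta$ and $\rho$ dependencies in their Theorem~3 do not contribute any extra $\log n$ factor beyond what the statement claims.
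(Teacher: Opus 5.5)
Your plan matches the paper, which justifies this fact only by invoking Theorem~3 of Braverman et al.\ with $\rho = 2$ and $\delta = n^{-c}$; your check that $\euclsq$ satisfies the $2$-relaxed triangle inequality via $(a+b)^2 \le 2a^2 + 2b^2$, and your union-bound fallback for success at every prefix, are both correct additions. One caveat: your explanation of where the extra $d\log n$ factor in space and update time comes from is a guess, so the final bounds should be read directly off their theorem rather than derived from that explanation.
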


As mentioned in the introduction, this algorithm results in a coreset of size dependent on~$n$, which is an obstacle to achieving FPT time. To remedy this, we give the following offline coreset construction algorithm for weighted sets, obtained by modifying the construction for unweighted inputs given in~\cite[Theorem~5.1]{HuangV20}.

\begin{restatable}{theorem}{smallcoresetoffline}
	\label{thm:small-coreset-offline}
	Let $d, k$ be positive integers, and let $\eps, \delta \in (0, 1/2)$.
    There exists a randomized
	algorithm that, for a given weighted set $(X,w)$ of $n > k$ points in $(\R^d,\eucl)$ with integer weights, with probability at least $1-\delta$, constructs an $\eps$-coreset for $k$-means clustering of size
	$\Oh\!\left(\eps^{-6}k \log k \log \frac{k}{\eps \delta} \right)$ 
	and runs in time and space
	$\Oh\!\left(ndk + nd \log\frac{n}{\delta} + k^2 \log^2\frac{1}{\delta} \log^2 n + n \log W \max(k, \log n)\right)\!$,
	where $W$ is the sum of the weights of points in $X$.
	Furthermore, the coreset is a subset of $X$.\footnote{Despite this additional property, the coreset is for \emph{continuous} $(k,z)$-clustering.} The algorithm works under the standard assumption that a lower bound on the $\eucl$ distance between two points in $X$ is known.
\end{restatable}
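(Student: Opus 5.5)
The plan is to reduce to the unweighted construction of Huang and Vishnoi~\cite[Theorem~5.1]{HuangV20} by viewing $(X,w)$ as a multiset $\hat X$ in which each $x$ occurs $w(x)$ times, so $|\hat X| = W$. A coreset for $\hat X$ is a coreset for $(X,w)$, since $\nu_{2,w}(X,C)=\nu_{2,1}(\hat X,C)$ for every $C$. Moreover, if the coreset for $\hat X$ consists of copies of input points, then merging equal copies (adding their weights) gives a subset of $X$. The difficulty is that $W$ may be huge, so we cannot materialize $\hat X$. We must therefore rerun each step of their algorithm on the weighted representation, so that its cost depends on $n$ and only on $\log W$.

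Recall the structure of the Huang--Vishnoi construction. First, compute a constant-factor (or $\Oh(1)$-bicriteria) approximate solution $A$ with $\Oh(k)$ centers. Next, partition the points into rings around each center, according to $\euclsq(x,A)$ relative to the average cost of its cluster. Rings with small total cost, i.e.\ inner and outer rings, are handled by collapsing groups of points to their center, or by a ``two-point'' rounding. The remaining $\Oh(\log \eps^{-1})$ rings per cluster are handled by uniform sampling, with $\Oh(\eps^{-6}k\log k\log\frac{k}{\eps\delta})$ samples in total, weighted by the inverse sampling probability. I will go through these steps in order.

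The first step is the approximate solution. I will run weighted $D^2$-sampling ($k$-means++ with probability proportional to $w(x)\euclsq(x,\cdot)$), oversampled to get an $\Oh(1)$-approximation with probability $1-\delta$. This costs $\Oh(ndk)$ for the distances plus $\Oh(k^2\log^2\frac1\delta\log^2 n)$ for amplification and evaluation of repetitions, matching the stated terms. The known lower bound on pairwise distances is needed here to bound the number of scales.

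The second step is the ring partition. Cluster sizes and average costs are weighted sums, computable in $\Oh(nd)$ time. Ring membership of $x$ depends only on $x$, so every copy of $x$ lies in the same ring.

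The third step is uniform sampling inside a ring $R$ with $|\hat R| = w(R)$. Drawing a uniform copy from $\hat R$ amounts to drawing $x\in R$ with probability $w(x)/w(R)$. I will implement this with prefix sums of integer weights and binary search, each draw costing $\Oh(\log W)$ bit operations. Alternatively, multinomial counts can be generated per point; this gives the $n\log W\max(k,\log n)$ term. Duplicate draws are merged by adding their weights. Since the analysis of Huang and Vishnoi is for an arbitrary (multi)set of points and only uses uniform sampling over the multiset, their correctness and size guarantees transfer verbatim.

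The fourth step is the collapsing steps. Here one must take care that groups of points being moved to a center are defined in terms of multiset counts. The replacement points must be chosen among existing points (e.g.\ a representative of the group), not virtual centers, to keep the coreset a subset of $X$. I expect this to be the main obstacle. The original analysis may replace groups by the center $a\in A$, which is not an input point. My first attempt is to choose $A\subseteq X$, which $D^2$-sampling gives for free, so collapsing to a center still yields an input point. I would then check that the error bound in their additive-error lemma for these rings, which is charged against $\nu(X,A)$, is unchanged when weights are counts.

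Finally, the term $nd\log\frac n\delta$ covers the distance computations needed for the ring assignment, plus the failure-probability boosting of the sampling.
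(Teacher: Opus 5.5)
\textbf{The gap: you are adapting an algorithm that is not the one the theorem rests on.} The Huang--Vishnoi construction used here (Algorithm~\ref{vh-alg} together with \cref{thm:reduction}) is two rounds of \emph{importance (sensitivity) sampling}. First compute a $\gamma$-approximate $C^\star$ with $k$ centers. Then set $\sigma_1(x)=64\gamma^2\bigl(\euclsq(x,c^\star(x))/\nu(X,C^\star)+1/\absolute{X_{c^\star(x)}}\bigr)$, draw $N_1$ points with probability proportional to $\sigma_1$, and reweight by the inverse probability. The second round already accepts weighted input.

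There is no ring partition, no collapsing of groups to centers, no two-point rounding and no per-ring uniform sampling; that is the structure of ring-based group sampling. Consequences:
\begin{itemize}
\item Your ``their correctness and size guarantees transfer verbatim'' is invoked for steps that were never analyzed with the bound $\Oh(\eps^{-6}k\log k\log\frac{k}{\eps\delta})$.
\item The ``main obstacle'' you identify does not arise, since the sample $D_1$ consists of input points.
\item The claim that the analysis ``only uses uniform sampling'' is false.
\end{itemize}

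What actually has to be checked is different, and this is \cref{lem:similar-outputs} and \cref{lm:sampling}:
\begin{itemize}
\item a $C^\star$ computed on $(X,\weight)$ is an $\Oh(1)$-approximation for the expanded instance;
\item every copy of $x$ has the sensitivity computed from $x$ with $\absolute{X_c}$ replaced by $\weight(X_c)$ and $\nu$ by the weighted cost;
\item copies can be sampled with probability $\propto\sigma_1$ by inverse-transform sampling over prefix sums of $\weight(x)\sigma_1(x)$ in $\Oh(n+N_1\log n)$ time. Your prefix-sum idea is right for this part.
\end{itemize}

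Your running-time bookkeeping is fitted to the target bound rather than derived. $\log\frac1\delta$ independent runs of weighted $D^2$-sampling already cost $\Theta(ndk\log\frac1\delta)$, which is not among the stated terms. The $n\log W\max(k,\log n)$ term is the cost of the deterministic weighted constant-factor algorithm of Mettu and Plaxton (\cref{fact:constantfactor}), not of generating multinomial counts.

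\textbf{The multiset step is a second, smaller gap.} Running Huang--Vishnoi on the multiset $\hat X$ requires checking that their proof never uses distinctness of points, which you assert but do not do. The paper avoids opening that proof. It builds a genuine set $T(X)$ by appending $W$ coordinates and displacing each copy along its own fresh coordinate by $\sqrt{\eps/(6W)}\,\eta$. It then applies Huang--Vishnoi to $T(X)$ as a black box. By \cref{clm:projection-coreset}, the projection back is an $\Oh(\eps)$-coreset:
\begin{itemize}
\item the additive error is at most $\eps\eta^2\weight_U(U)/(6W)\le\eps\eta^2$, using \cref{clm:weight-of-coresets};
\item since $n>k$, every set of $k$ centers costs at least $\eta^2/2$, so this error is relative.
\end{itemize}
This is where the known lower bound on pairwise distances is used, not to bound a number of scales in $D^2$-sampling, which needs no such bound. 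The same $n>k$ argument shows that $\alpha(C^\star)$ is a $3\gamma$-approximation on $T(X)$, so a constant-factor solution of the weighted instance suffices.
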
 

\paragraph*{Model of Computation.} We consider the streaming model of computation, in which the input points arrive one by one. The stream consists of $n$ points from $\Sigma^\ell$ with $\Sigma = \{1, \dots, \absolute{\Sigma}\}$ and $\absolute{\Sigma} = n^{\Oh(1)}$, which are to be processed by a word RAM using words of size $\Theta(\log (n\ell))$. 
The space complexity measures the total amount of memory used, including the space required to store (parts of) the input, and is expressed in terms of the number of memory words. 
As is common in the literature on streaming algorithms, we do not account for the preprocessing time (before the first point arrives).

\section{Streaming Coreset Computation via an Embedding}
\label{sec:coreset}
In this section, we consider the metric space $(\Sigma^\ell, \hd)$.
We denote by $x_1, \dots, x_n$ the elements of the input set $X$ that arrives as a stream. 
We describe a streaming algorithm that computes an $\eps$-coreset for $k$-median clustering for $X$, along with some auxiliary information that will be useful in the next section.

The algorithm consists of four main steps.
In the first two steps, it reduces the problem to low-dimensional Euclidean space. Using Karloff’s embedding~\cite{KARLOFF}, 
it maps strings over~$\Sigma$ to binary vectors, with the guarantee that the (scaled) squared 
$\eucl$ distance between embedded points provides a $(1+\eps)$-approximation 
of their Hamming distance (Step~1). For our polynomial-size alphabet, however, this embedding increases the dimension 
by a factor of $\Oh(\log^2 n)$, which is prohibitive in the streaming setting because
the space complexity depends linearly on the dimension. To mitigate this, the algorithm applies
a terminal embedding~\cite[Theorem~1.11]{10.1145/3313276.3316307} that extends the Johnson--Lindenstrauss transform. This reduces the dimension 
while preserving all distances to candidate centers up to an additional multiplicative 
$(1+\eps)$ distortion (Step~2). The composition of these embeddings therefore yields a $(1+\Oh(\eps))$-distortion mapping from Hamming space to low-dimensional $L_2$ space. In Step~3, the algorithm applies a 
streaming coreset construction for $k$-means clustering in Euclidean space 
(see, e.g., \cref{fact:means_clustering}). By the distortion guarantee, the resulting 
coreset is also a $(1+\Oh(\eps))$-coreset for $k$-median on the original set of strings. Finally, in Step 4, if necessary, we shrink the size of the coreset (increasing the error probability) using \cref{thm:small-coreset-offline}. Below, we provide the full details.

\paragraph*{Tools for Step 1: Reduction to Binary Alphabet.}
We first build an embedding from $\Sigma$ to binary alphabets, using techniques introduced by Karloff~\cite{KARLOFF}.\footnote{Note that a simpler unary reduction gives an approximation factor of $2$, which is too high for our purposes. The Kopelowitz--Porat embedding~\cite{kopelowitz_et_al:OASIcs.SOSA.2018.10} gives an approximation factor of $1+\eps$ and improves the dimensional blow-up of Karloff's embedding, but there is no direct way to adapt their techniques to the streaming setting.}
Below, we show how Karloff's algorithm can be implemented in small space.

\begin{definition}[Balanced matrices, {\cite{KARLOFF}}]
\label{def:balanced_matrix}
Let $\xi \in (0, 1]$, and let $r$ and $M$ be positive integers. An $r \times M$ binary matrix $A = (a_{ij})$ is $\xi$-balanced if, for all $j, j' \in \{1, \dots, M\}$, the number of $i \in \{1, \dots, r\}$ such that $a_{ij} \neq a_{ij'}$ is in $[(1-\xi)r/2, (1+\xi)r/2]$.
\end{definition}

\begin{fact}[{\cite[Theorems 6 and 8]{KARLOFF}}]
\label{fact:matrix_constr}
Consider an integer $M$ such that $\absolute{\Sigma} \le M \equiv 0 \pmod 3$ and let $\xi \in [1/M, 1)$ be a constant. There is an $\Oh(\xi^{-2} \cdot M \cdot \log^3 M)$-time algorithm that constructs a matrix $B$ of size $\Oh(\xi^{-2} \cdot \log^2(M+1)) \times \Oh(\log (M+1))$ and a matrix $L$ of size $\Oh(\log (M+1)) \times M$ with the following properties. The $j$-th column of $L$ is the binary representation of $j$ with the least significant bit at the bottom. The matrix $A = B \cdot L \pmod 2$ is $\xi$-balanced.
\end{fact}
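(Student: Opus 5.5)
This fact is quoted from Karloff~\cite[Theorems 6 and 8]{KARLOFF}, so the plan is to reconstruct its proof: derandomize a random linear code and show that composing it with the binary-representation matrix gives balanced columns.

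\textbf{Step 1: reduce balance to weights.} Set $m=\Theta(\log(M+1))$ and let $L$ be the $m\times M$ matrix whose $j$-th column is the binary representation of $j$. Distinct columns $L_j\neq L_{j'}$ have nonzero difference $v=L_j\oplus L_{j'}\in\{0,1\}^m$. By linearity over $\mathbb{F}_2$, the columns $A_j$ and $A_{j'}$ differ exactly in the rows $i$ where $(Bv)_i=1$. So it suffices to construct a binary $r\times m$ matrix $B$, with $r=\Oh(\xi^{-2}\log^2(M+1))$, such that for every nonzero $v$ in the span of the differences $\{L_j\oplus L_{j'}\}$, the Hamming weight of $Bv$ lies in $[(1-\xi)r/2,(1+\xi)r/2]$. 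This is a small-bias condition restricted to the relevant vectors. The condition $M\equiv 0\pmod 3$ is a technicality of Karloff's construction. I would keep it but not rely on it beyond that.

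\textbf{Step 2: existence and derandomization.} For a uniformly random row $b\in\{0,1\}^m$ and fixed nonzero $v$, the bit $\langle b,v\rangle$ is uniform. With $r$ independent rows, Chernoff plus a union bound over the at most $M^2$ difference vectors gives balance with positive probability once $r=\Theta(\xi^{-2}\log M)$. To obtain a deterministic algorithm in time $\Oh(\xi^{-2}M\log^3 M)$, I would fix the rows one at a time by the method of conditional expectations. The pessimistic estimator is the sum over relevant $v$ of $\exp(\lambda(\text{deviation}))$ terms. Each row choice needs $\Oh(M)$ estimator updates per candidate, and restricting candidate rows to a small sample space costs extra logarithmic factors. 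This accounts for the $\log^2$ in $r$ and the $\log^3$ in the running time.

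\textbf{Main obstacle.} The hard step is the running-time bound. Naively enumerating candidate rows from $\{0,1\}^m$ costs $2^m=\mathrm{poly}(M)$ per row, which is too much. I would first try Karloff's approach: build each row bit by bit with conditional expectations, so each bit is decided by evaluating the estimator over the $\Oh(M)$ relevant vectors. This yields $\Oh(r\cdot m\cdot M)=\Oh(\xi^{-2}M\log^3 M)$ total work. Since the statement is imported verbatim from \cite{KARLOFF}, in the paper I would simply cite it rather than reprove it.
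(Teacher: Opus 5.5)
The paper gives no proof of this statement; it imports it from \cite{KARLOFF}, so your closing decision to cite it is exactly what the paper does. Your reconstruction is sound, but it is a different derandomization from the one the stated bounds point to, and you misattribute the $\log^2$. Step~1 is right: $A_j\oplus A_{j'}=B(L_j\oplus L_{j'})$, and the relevant $v$ lie among the $\Oh(M)$ nonzero vectors of $\{0,1\}^m$.

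A pessimistic estimator, however, reproduces the Chernoff--union-bound guarantee with no loss. Fix the bits of each row one at a time; this is an $\Oh(1)$ update per $v$, since $\langle b,v\rangle$ stays uniform while $v$ has support on an unfixed bit. This gives $r=\Oh(\xi^{-2}\log M)$ and time $\Oh(r\,m\,M)=\Oh(\xi^{-2}M\log^2 M)$. That beats the statement, which is fine since its bounds are only upper bounds.

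The stated bounds are instead exactly those obtained by taking the rows of $B$ to be an explicit $\xi$-biased sample space over $\{0,1\}^m$ of size $\Oh((m/\xi)^2)=\Oh(\xi^{-2}\log^2 M)$, e.g.\ the powering construction of Alon, Goldreich, H{\aa}stad, and Peralta. Then $\xi$-balance of $A$ is immediate from the definition of bias. Moreover, $r\cdot m\cdot M=\Oh(\xi^{-2}M\log^3 M)$ is just the cost of writing out $A=B\cdot L$.

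Your route is elementary and yields fewer rows. It does, however, need exponential-type estimators computed to adequate precision on the word RAM, and $\Theta(M)$ words of working memory for the per-$v$ state. The explicit construction avoids both and outputs $B$ in $\Oh(r\,m)$ space. That difference is irrelevant to the Fact itself, which bounds only time. It matters for how the paper uses it, since $M$ is polynomial in $n$ there while \cref{cor:streaming} claims $\Oh(\eps^{-2}\log^3 M)$ space.
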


\begin{fact}[{cf.~\cite[Theorem 1]{KARLOFF}}]
\label{fact:embedding}
Let $ \eps \in (0, 1/2)$ be a constant. Consider an $\eps/3$-balanced binary matrix $A = (a_{ij})$ of size $r \times M$, where $\absolute{\Sigma} \le M = 0 \pmod 3$. For a string $x \in \Sigma^\ell$, define $\kappa(x) := y \in \{0,1\}^{r \cdot \ell}$, where, for $i \in \{1, \dots, r\}$ and $j \in \{1, \dots, \ell\}$, we have $y[i\ell - \ell + j] = a_{ix[j]}$.
For strings $x,y \in \Sigma^\ell$ we then have:

\begin{equation}
\label{eq:alphabet_reduction}
\hd(x,y) \le \frac {\hd(\kappa(x),\kappa(y))}{\frac r 2 - \frac{\eps r} 6} \le (1+\eps) \cdot \hd(x,y)
\end{equation}
\end{fact}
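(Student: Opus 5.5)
We have to prove \cref{fact:embedding}: for strings $x,y\in\Sigma^\ell$, the Hamming distance of $\kappa(x),\kappa(y)$ scaled by $\frac r2-\frac{\eps r}{6}$ lies between $\hd(x,y)$ and $(1+\eps)\hd(x,y)$. The plan is to decompose $\hd(\kappa(x),\kappa(y))$ coordinate by coordinate of the original strings and then use the $\eps/3$-balance of $A$ on each coordinate separately.

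\textbf{Step 1 (decomposition).} By definition of $\kappa$, the block of positions $\{i\ell-\ell+j : i=1,\dots,r\}$ of $\kappa(x)$ depends only on the symbol $x[j]$; it is the column $x[j]$ of $A$, i.e.\ the entries $a_{1x[j]},\dots,a_{rx[j]}$. These blocks, over $j=1,\dots,\ell$, partition $\{1,\dots,r\ell\}$. Hence
\[
\hd(\kappa(x),\kappa(y)) = \sum_{j=1}^{\ell} D_j,\qquad D_j := \absolute{\{i : a_{i x[j]} \neq a_{i y[j]}\}}.
\]

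\textbf{Step 2 (per-coordinate bounds).} If $x[j]=y[j]$, the two columns coincide and $D_j=0$. If $x[j]\neq y[j]$, then $x[j],y[j]\in\Sigma\subseteq\{1,\dots,M\}$, so $\eps/3$-balance (\cref{def:balanced_matrix} with $\xi=\eps/3$) gives
\[
D_j\in\Bigl[\tfrac r2-\tfrac{\eps r}{6},\ \tfrac r2+\tfrac{\eps r}{6}\Bigr].
\]
Summing over the $\hd(x,y)$ mismatching positions yields
\[
\hd(x,y)\Bigl(\tfrac r2-\tfrac{\eps r}6\Bigr)\le \hd(\kappa(x),\kappa(y))\le \hd(x,y)\Bigl(\tfrac r2+\tfrac{\eps r}6\Bigr).
\]

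\textbf{Step 3 (normalization).} Dividing by $\frac r2-\frac{\eps r}6>0$ gives the left inequality immediately. For the right inequality it remains to check
\[
\frac{1/2+\eps/6}{1/2-\eps/6}=\frac{3+\eps}{3-\eps}\le 1+\eps,
\]
which is equivalent to $3+\eps\le 3+2\eps-\eps^2$, i.e.\ $\eps^2\le\eps$. This holds since $\eps\in(0,1/2)$.

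The only point needing care is the bookkeeping in Step 3: the upper deviation $\frac{\eps r}{6}$ gets amplified by the normalization, and it must still fit within a factor $1+\eps$, which the constraint $\eps<1$ ensures. Everything else follows directly from the definitions.
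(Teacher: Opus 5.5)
Your proof is correct: the decomposition of $\hd(\kappa(x),\kappa(y))$ into per-position column disagreements, the use of $\eps/3$-balance only for distinct symbols (equal symbols give $0$), and the normalization check $\frac{3+\eps}{3-\eps}\le 1+\eps \iff \eps^2\le\eps$ together establish the claim. The paper gives no proof of its own and simply cites Karloff's Theorem~1; your direct per-position counting argument is the standard one behind that result.
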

\begin{restatable}{corollary}{streaming}
\label{cor:streaming}
Given a string $x \in \Sigma^\ell$, consider an integer $M$ such that $\absolute{\Sigma} \le M = 0 \pmod 3$ and let $ \eps \in (1/M, 1/2)$. There is a streaming algorithm that uses $\Oh(\eps^{-2}\log^3 M)$ space and computes in $\Oh(\log M)$ time per character an embedding $\kappa(x): \Sigma^\ell \rightarrow \{0,1\}^{r \ell }$ which satisfies both $r = \Oh(\eps^{-2} \log^2 (M+1))$ and \cref{eq:alphabet_reduction} for all strings $x,y \in \Sigma^\ell$.
\end{restatable}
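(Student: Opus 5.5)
We will prove \cref{cor:streaming} by applying \cref{fact:matrix_constr} with $\xi = \eps/3$ and then \cref{fact:embedding}. The key point is that we never store the full $r \times M$ matrix $A$, which could be large since $M \geq |\Sigma| = n^{\Oh(1)}$. Instead, we keep only the factor $B$. Because $\eps > 1/M$, we have $\xi = \eps/3 \ge 1/(3M)$. If needed, we can enlarge $M$ by a constant factor (keeping $M \equiv 0 \pmod 3$) so that the hypothesis $\xi \ge 1/M$ of \cref{fact:matrix_constr} holds; this changes $\log M$ only by an additive constant. During preprocessing, which is not charged by the model of computation, we run the algorithm of \cref{fact:matrix_constr} and obtain $B$ of size $r \times s$ with $r = \Oh(\eps^{-2}\log^2(M+1))$ and $s = \Oh(\log(M+1))$. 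Storing $B$ takes $\Oh(\eps^{-2}\log^3 M)$ bits, hence at most that many words, which matches the claimed space bound. The matrix $A = B\cdot L \bmod 2$ is then $\eps/3$-balanced, so \cref{fact:embedding} yields \cref{eq:alphabet_reduction} for the map $\kappa$ defined from $A$.

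Next we compute $\kappa(x)$ character by character. When $x[j] = c$ arrives, the column of $A$ indexed by $c$ is $B \cdot L_{\cdot,c} \bmod 2$, where $L_{\cdot,c}$ is the binary representation of $c$. This column is therefore the XOR of the columns of $B$ selected by the set bits of $c$. We output its entries $a_{1c},\dots,a_{rc}$ as the coordinates $y[i\ell-\ell+j]$.

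The main obstacle is meeting the $\Oh(\log M)$ time bound per character: computing the column naively costs $\Theta(r\cdot s)$ bit operations, and even outputting $r$ bits seems to cost $r$ time. To get around this, we use word-level parallelism and store each column of $B$ packed into $\lceil r/w\rceil$ machine words, where $w = \Theta(\log(n\ell))$. The XOR of at most $s = \Oh(\log M)$ columns then takes $\Oh(s \cdot r/w)$ word operations. Since $r/w = \Oh(\eps^{-2}\log M)$ when $\log M = \Theta(\log n)$, this does not immediately give $\Oh(\log M)$ in general.

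To reduce the cost further, we precompute lookup tables. We split the $s$ bits of $c$ into chunks of $\Theta(\log M / \log\log M)$ or constant size, and for each chunk value we precompute the XOR of the corresponding columns of $B$, in packed form. The table sizes must stay within the space budget, which we expect to hold because $s$ is logarithmic. We regard $\eps$ as a constant, which \cref{fact:embedding} assumes. With $\eps$ constant, $r/w = \Oh(\log M)$ words per column, so each character needs $\Oh(1)$ chunk lookups per word. This amounts to $\Oh(\log M)$ time to produce the packed column, which is delivered as the output for position $j$. Since the coordinates of $\kappa(x)$ are interleaved with stride $\ell$, the interpretation is that the $r$ output bits for position $j$ are emitted together as one packed block, and the coordinate ordering of the vector is only a relabeling.

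Finally, correctness of \cref{eq:alphabet_reduction} follows directly from \cref{fact:embedding}, because the computed bits are exactly the entries $a_{i x[j]}$ of the balanced matrix $A$.
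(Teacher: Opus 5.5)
Your construction of $\kappa$ is the right one and matches the paper's. You run \cref{fact:matrix_constr} with $\xi=\eps/3$ in preprocessing, keep only $B$ ($\Oh(\eps^{-2}\log^3 M)$ bits), regenerate the needed column of $L$ as the binary representation of the symbol, and invoke \cref{fact:embedding}. Your remark that replacing $M$ by $3M$ guarantees $\xi\ge 1/M$ is a correct detail the paper skips.

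The gap is in the time analysis. It comes from reading ``per character'' as ``per input symbol $x[j]$'', so that all $r$ bits of a column of $A$ must be produced in $\Oh(\log M)$ time. Your lookup-table step does not deliver this. Splitting the $s=\Theta(\log M)$ bits of $c$ into chunks of $b$ bits costs $(s/b)\cdot\lceil r/w\rceil$ word operations per symbol and $(s/b)\cdot 2^b\cdot\lceil r/w\rceil$ words of tables. Take your own assumptions, $\eps=\Theta(1)$ and $\lceil r/w\rceil=\Theta(\log M)$:
\begin{itemize}
\item Getting $\Oh(1)$ lookups per output word forces $b=\Omega(\log M)$. The tables then have $M^{\Omega(1)}$ entries, far beyond the $\Oh(\eps^{-2}\log^3 M)$ budget.
\item With $b=\Theta(\log M/\log\log M)$ you need $\Theta(\log\log M)$ lookups per word, so time is $\Theta(\log M\log\log M)$. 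The tables still have $2^{\Theta(\log M/\log\log M)}$ entries, which is superpolylogarithmic.
\item With constant $b$ you are back to $\Theta(\log^2 M)$ time.
\end{itemize}
So the sentence ``which we expect to hold because $s$ is logarithmic'' is exactly where the argument fails. Declaring $\eps$ constant also discards the $\eps^{-2}$ dependence that the corollary explicitly tracks; the paper later uses it with $\eps\in(n^{-\Oh(1)},1/2)$. For subconstant $\eps$, a single column of $A$ can already occupy $\omega(\log M)$ words, so the per-symbol reading cannot be what is meant.

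The time bound is per character of the \emph{output} $\kappa(x)$, i.e.\ per entry $a_{i\,x[j]}$. This is how the paper proves it, and how it is used later: $x_i'$ is generated on the fly in $\Oh(\log n)$ time per character. Under this reading no word-level tricks or lookup tables are needed. The entry $a_{ic}=\bigoplus_{t} B_{it}L_{tc}$ is the parity of $s=\Oh(\log M)$ products of a stored bit of row $i$ of $B$ with the $t$-th bit of $c$. Hence each output character costs $\Oh(\log M)$ time, with no assumption that $\eps$ is constant. In other words, the ``naive'' $\Theta(r\cdot s)$ cost per column that you dismissed is exactly $\Oh(\log M)$ per output character, which is the paper's one-line argument.
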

\begin{proof}
We apply \cref{fact:matrix_constr} to construct the matrices $B,L$ defining the embedding $\kappa$ of \cref{fact:embedding} in $\Oh(\eps^{-2} M \log^3 M)$ time. Note that it is enough to store $B$ in $\Oh(\eps^{-2}\log^3 M)$ space as $L$ can be computed on the fly. An entry $a_{ij}$ of the matrix $A = B \cdot L \pmod 2$ can then be computed in $\Oh(\log M)$ time.
\end{proof}

\paragraph*{Tools for Step 2: Dimension Reduction.}
\begin{fact}[{Johnson--Lindenstrauss transform~\cite[Theorem 1.1]{ACHLIOPTAS03}}]
\label{fact:JL}
Let $X$ be a set of $n$ points in~$\R^m$. Given $\eps, \beta > 0$, let
\[k_0 = \frac{4+2\beta}{\eps^2-\eps^3/3} \log n\]
For integer $k \ge k_0$, let $R$ be an $m \times k$ random matrix with $R(i,j) = r_{i,j}$, where $\{r_{i,j}\}$ are independent random variables chosen from the following distribution:
\[r_{i,j} = 
\begin{cases}
+1 \text{ with probability } 1/2,\\
-1 \text{ with probability } 1/2.
\end{cases}
\]  
For $x \in X$, define $\jl(x) = \frac{1}{\sqrt{k}} x^T R$. With probability at least $1-n^{-\beta}$, for all $x,y \in X$, 
\begin{equation}
\label{eq:jl}
(1-\eps) \cdot \eucl(x,y) \le \eucl(\jl(x),\jl(y)) \le (1+\eps) \cdot \eucl(x,y)
\end{equation}
\end{fact}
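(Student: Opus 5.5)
The plan is the standard Chernoff-plus-union-bound proof of the Johnson--Lindenstrauss lemma, with Rademacher entries in place of Gaussians. By linearity, $\jl(x)-\jl(y)=\jl(x-y)$, so it suffices to prove a statement about a single fixed vector. For every fixed $u\in\R^m$ I aim to show
\[
\Pr\!\left[\euclsq(\jl(u),0)\notin[(1-\eps),(1+\eps)]\cdot \euclsq(u,0)\right]\le 2n^{-(2+\beta)} .
\]
Taking square roots of this squared-norm bound gives \cref{eq:jl}, because $\sqrt{1+\eps}\le 1+\eps$ and $\sqrt{1-\eps}\ge 1-\eps$. A union bound over the at most $n^2/2$ difference vectors $x-y$ then gives total failure probability at most $n^{-\beta}$.

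Fix $u$ and normalise so that $\euclsq(u,0)=1$. For $j=1,\dots,k$ let $Q_j=\sum_i u[i]\, r_{i,j}$. The $Q_j$ are i.i.d., $\mathbb{E}[Q_j^2]=1$, and $\euclsq(\jl(u),0)=\frac1k\sum_j Q_j^2$. The key step is the moment comparison $\mathbb{E}[Q^{2m}]\le \mathbb{E}[G^{2m}]$ for all $m$, where $G\sim N(0,1)$. To prove it, expand $Q^{2m}$ as a multinomial sum. Odd powers of Rademacher variables have zero expectation, so only all-even monomials survive. For these, $\mathbb{E}[r^{2a}]=1\le (2a-1)!!=\mathbb{E}[g^{2a}]$. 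Moreover $G$ has the same distribution as $\sum_i u[i] g_i$ with i.i.d.\ standard Gaussians $g_i$, and its expansion has the same nonnegative coefficients on the even monomials. Summing the resulting power series gives
\[
\mathbb{E}[e^{tQ^2}]\le \mathbb{E}[e^{tG^2}]=(1-2t)^{-1/2}\qquad\text{for } t\in[0,1/2).
\]

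For the upper tail, apply Markov's inequality to $e^{t\sum_j Q_j^2}$ and choose $t=\frac{\eps}{2(1+\eps)}$. This yields
\[
\Pr\!\left[\sum_j Q_j^2>(1+\eps)k\right]\le\left((1+\eps)e^{-\eps}\right)^{k/2}\le \exp\!\left(-\tfrac{k}{2}\left(\tfrac{\eps^2}{2}-\tfrac{\eps^3}{3}\right)\right),
\]
which is at most $n^{-(2+\beta)}$ once $k\ge k_0$. For the lower tail I bound $\mathbb{E}[e^{-tQ^2}]\le 1-t+\tfrac{3}{2}t^2$. This uses $e^{-s}\le 1-s+s^2/2$ together with $\mathbb{E}[Q^4]\le\mathbb{E}[G^4]=3$, which is a special case of the moment comparison. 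Optimising $t$ gives the same type of exponent, so this tail is also at most $n^{-(2+\beta)}$.

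I expect the main obstacle to be the bookkeeping that makes the constants match $k_0=\frac{4+2\beta}{\eps^2-\eps^3/3}\log n$ exactly. Two checks are needed: that both tail exponents are at least $\frac{k}{4}(\eps^2-\frac{2\eps^3}{3})$ up to the stated form, and that the factor $2$ from the two tails and the factor $n^2/2$ from the union bound are absorbed into $n^{-\beta}$. I will verify the Taylor inequalities $\ln(1+\eps)\le\eps-\eps^2/2+\eps^3/3$ and the lower-tail analogue directly for $\eps\in(0,1)$.
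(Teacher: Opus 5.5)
Your individual steps are sound, and your architecture (linearity, the moment domination $\mathbb{E}[Q^{2m}]\le\mathbb{E}[G^{2m}]$, Chernoff on both tails, union bound) is Achlioptas's own argument. The paper itself gives no proof beyond citing him. The gap is the claim that $\exp\bigl(-\tfrac k2(\tfrac{\eps^2}{2}-\tfrac{\eps^3}{3})\bigr)\le n^{-(2+\beta)}$ once $k\ge k_0$. In this statement, $k_0$ has denominator $\eps^2-\eps^3/3$, not $\eps^2/2-\eps^3/3$. So at $k=k_0$ your exponent equals $(2+\beta)\log n\cdot\frac{1/2-\eps/3}{1-\eps/3}$, which is less than $(1+\beta/2)\log n$. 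Each tail is then bounded only by roughly $n^{-(1+\beta/2)}$, and the union bound over $n^2/2$ pairs gives about $n^{1-\beta/2}$, never $n^{-\beta}$.

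The constant check you flagged as the main obstacle cannot succeed, because your intermediate claim (squared norms within $1\pm\eps$ with failure probability $2n^{-(2+\beta)}$) is false at this $k$. Take $u$ with many equal tiny coordinates. Then $\sum_j Q_j^2$ is essentially $\chi^2_k$-distributed, and its upper tail at $(1+\eps)k$ is $e^{-(1+o(1))\frac k2(\eps-\ln(1+\eps))}\ge e^{-(1+o(1))k\eps^2/4}$. At $k=k_0$ this is at least roughly $n^{-(1+\beta/2)/(1-\eps/3)}$, which is much larger than $n^{-(2+\beta)}$. What your computation actually proves is Achlioptas's Theorem~1.1 as he states it: \emph{squared} distances are preserved within $1\pm\eps$ for $k\ge\frac{4+2\beta}{\eps^2/2-\eps^3/3}\log n$, which is about twice the $k_0$ here.

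The smaller $k_0$ is affordable only because the statement asks for \emph{unsquared} distortion. Your step $\sqrt{1+\eps}\le 1+\eps$ throws away exactly the factor of $2$ you need. Instead, compare the squared norm with the thresholds $(1+\eps)^2\approx 1+2\eps$ and $(1-\eps)^2\approx 1-2\eps$.

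For the upper tail, choose $t$ with $1-2t=(1+\eps)^{-2}$. Your moment-generating-function bound then gives $\exp\bigl(-\tfrac k2(2\eps+\eps^2-2\ln(1+\eps))\bigr)\le\exp\bigl(-k(\eps^2-\eps^3/3)\bigr)\le n^{-(4+2\beta)}$.

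For the lower tail, use your estimate $\mathbb{E}[e^{-tQ^2}]\le 1-t+\tfrac32t^2$ against the threshold $(1-\eps)^2$ and optimise $t$. The resulting exponent per coordinate is about $\eps^2$ for small $\eps$. It stays at least $\tfrac12(\eps^2-\eps^3/3)$ up to $\eps=1/2$; numerically it is about $0.108$ versus $0.104$ there. This gives $n^{-(2+\beta)}$ per pair and total failure probability at most $n^{-\beta}$ for $\eps\in(0,1/2]$, which is the range in which the paper uses the fact.

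Slightly above $\eps=1/2$ this crude lower-tail estimate no longer suffices. So your plan to verify everything for all $\eps\in(0,1)$ would need a sharper lower-tail bound there.
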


\begin{definition}[Terminal embedding {\cite{10.1145/3188745.3188828}}]
Suppose that we are given a set of points $X \subset \R^m$ (which we call \emph{terminals}), where $m$ is a positive integer. We say that a map $f: \R^m \rightarrow \R^{m'}$ is a \emph{terminal dimension reduction} with distortion $D$ if for every terminal $x \in X$ and point $y \in \R^m$ ($y$ may be a terminal), we have $\eucl(x,y) \le \eucl(f(x),f(y)) \le D \cdot \eucl(x,y)$.
\end{definition}

\begin{fact}[{\cite[Theorem 1.11]{10.1145/3313276.3316307}}]
\label{fact:terminal}
Let $X$ be a set of $n$ points in $\R^m$ and $ \eps \in (0, 1/2)$ a parameter. Assume that a function $\jl : \R^m \rightarrow \R^{m'-1}$, for some integer $m'$, satisfies \cref{eq:jl}. In this case, there exists a terminal dimension reduction $\gamma : \R^m \rightarrow \R^{m'}$ with distortion $1+\eps$, where $m' = \Oh(\frac{\log n}{\eps^2})$. For $x \in X$, $\gamma(x) = (\jl(x), 0)$ (i.e. $\gamma(x)$ equals $\jl(x)$ on the first $m'-1$ coordinates and $0$ on the $m'$-th coordinate).
\end{fact}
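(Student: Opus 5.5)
The plan follows the outer-extension construction of Narayanan and Nelson, on which \cref{fact:terminal} is based. Write $f=\jl$. For terminals we set $\gamma(x)=(f(x),0)$, as the statement prescribes. For an arbitrary point $y\in\R^m$, let $x\in X$ be a terminal nearest to $y$, and write $u:=y-x$. We look for a vector $z\in\R^{m'-1}$ satisfying two conditions:
\[
\|z\|_2\le \|u\|_2
\]
and, for every terminal $x'\in X$,
\[
\bigl|\langle z, f(x')-f(x)\rangle-\langle u, x'-x\rangle\bigr|\le \eps\,\|u\|_2\,\|x'-x\|_2 .
\]
We then define
\[
\gamma(y):=\Bigl(f(x)+z,\ \sqrt{\|u\|_2^2-\|z\|_2^2}\Bigr).
\]
At the end, $\gamma$ is multiplied by a global factor $1+\Oh(\eps)$ so that it never contracts distances, and $\eps$ is rescaled by a constant.

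The distortion bound, assuming such a $z$ exists, is a direct calculation. Fix a terminal $x'$ and expand
\[
\eucl^2(\gamma(y),\gamma(x'))=\|z-(f(x')-f(x))\|_2^2+\|u\|_2^2-\|z\|_2^2
=\|u\|_2^2-2\langle z,f(x')-f(x)\rangle+\|f(x')-f(x)\|_2^2 .
\]
Compare this with
\[
\eucl^2(y,x')=\|u\|_2^2-2\langle u,x'-x\rangle+\|x'-x\|_2^2 .
\]
The last terms differ by $\Oh(\eps)\|x'-x\|_2^2$ by \cref{eq:jl}, and the cross terms differ by at most $2\eps\|u\|_2\|x'-x\|_2$ by the inner-product condition on $z$. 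Since $x$ is the terminal nearest to $y$, we have $\|u\|_2\le\eucl(y,x')$ and hence $\|x'-x\|_2\le 2\,\eucl(y,x')$. Both errors are therefore $\Oh(\eps)\,\eucl^2(y,x')$. This gives $(1\pm\Oh(\eps))$ multiplicative distortion for every pair consisting of a terminal and an arbitrary point. Pairs of two terminals are handled directly by \cref{eq:jl}. The dimension is $m'=\Oh(\eps^{-2}\log n)$, inherited from $f$ (cf.~\cref{fact:JL}).

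The main obstacle is the existence of $z$. We would phrase it as a zero-sum game. Let $T=\{(x'-x)/\|x'-x\|_2\}$ be the set of normalized difference vectors, and consider the bilinear objective
\[
\max_{v\in\mathrm{conv}(\pm T)}\Bigl(\bigl\langle z,\tilde f(v)\bigr\rangle-\bigl\langle u/\|u\|_2,\,v\bigr\rangle\Bigr),
\]
minimized over $z$ in the unit ball, where $\tilde f$ denotes the linear map underlying $f$. Both domains are compact and convex and the objective is bilinear, so von Neumann's minimax theorem lets us swap min and max. For a fixed $v$, the choice $z=\tilde f(v)\cdot(\text{scaling})$ makes the inner value small, provided $\tilde f$ approximately preserves the norms of all vectors in $\mathrm{conv}(\pm T)$, not just those in $T$ ("convex-hull distortion").

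Here \cref{eq:jl} alone is not obviously sufficient. I would first try to derive convex-hull distortion from the fact that $f$ is a random sign matrix: a Gaussian-width (or chaining) bound shows that $\mathrm{conv}(\pm T)$, having width $\Oh(\sqrt{\log n})$, is preserved up to additive $\eps$ when $m'=\Oh(\eps^{-2}\log n)$. Should the fact be intended only under \cref{eq:jl}, I would fall back to restricting $v$ to sparse combinations, obtained by Maurey-type sampling, whose norms follow from pairwise distance preservation via polarization.
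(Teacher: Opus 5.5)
Your primary route is the one behind the cited result; the paper itself gives no proof. Narayanan and Nelson instantiate the outer-extension recipe of Mahabadi et al.\ exactly as you do, and their new ingredient is convex-hull distortion of the random matrix, obtained from a Gaussian-width bound. Your distortion computation is correct, and so is the minimax step. Note that the minimax step only needs the one-sided bound $\|\tilde f v\|_2\ge\|v\|_2-\eps$ on $\mathrm{conv}(\pm T)$: take $z=-\tilde f v/\|\tilde f v\|_2$.

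The fallback cannot work, and in fact no argument from \cref{eq:jl} alone can. Polarization recovers $\langle\tilde f(x'-x),\tilde f(x''-x)\rangle$ only up to $\Oh(\eps)\max(\|x'-x\|_2,\|x''-x\|_2)^2$. After normalizing, this error is multiplied by the scale ratio $\|x''-x\|_2/\|x'-x\|_2$, which is unbounded.

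Concretely, let $X=\{0,e_1,Le_2\}\subset\R^2$ with $L=1/\eps$, and let $\jl(a,b)=a-b$, padded with zeros. This linear map satisfies \cref{eq:jl}, since the only nontrivial ratio is $(1+L)/\sqrt{1+L^2}\le1+\eps$. Yet $v=(e_1+e_2)/2\in\mathrm{conv}(T)$ has $\jl(v)=0$ while $\|v\|_2=1/\sqrt2$. For $u=-c(e_1+e_2)$ (nearest terminal $0$), your two constraints force the first coordinate of $z$ to lie within $\sqrt2\,\eps c$ of both $-c$ and $c$, which is impossible.

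Worse, no extension exists at all. Take $y=\eps^{-1/2}(e_1+e_2)/\sqrt2$. The distances from $\gamma(y)$ to $\gamma(0)=0$ and to $\gamma(e_1)$ fix $\langle\gamma(y),\gamma(e_1)\rangle$. Since $\gamma(Le_2)=-L\gamma(e_1)$, they also fix $\|\gamma(y)-\gamma(Le_2)\|_2$. A two-line computation then shows that every $\gamma$ with $\gamma(x)=(\jl(x),0)$ on $X$ has distortion $1+\Omega(\sqrt\eps)$. This is the $\sqrt\eps$ barrier of Mahabadi et al., and it is why pairwise preservation only yields $\eps^{-4}\log n$. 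Linearity, which you need for $\tilde f$, is also not part of \cref{eq:jl}.

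So the fact must be read as the paper actually uses it in Step~2. There, $\jl$ is the random sign matrix of \cref{fact:JL}. The event that this matrix has $\eps$-convex-hull distortion on all $\le n^2$ normalized differences must be added to the high-probability conditioning in \cref{lm:coreset_analysis}; taking all differences means one event covers every nearest terminal. For this event, your first route is the right one. Make sure the width or chaining bound you invoke covers subgaussian ($\pm1$) matrices and comes in tail form. Such a matrix-deviation inequality gives $\sup_{v\in\mathrm{conv}(\pm T)}\bigl|\|\tilde f v\|_2-\|v\|_2\bigr|\le\eps$ with probability $1-n^{-\Omega(1)}$ when $m'=\Theta(\eps^{-2}\log n)$.

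Finally, drop the global rescaling. It breaks the prescribed form $\gamma(x)=(\jl(x),0)$, and the paper's analysis only uses the two-sided $(1\pm\Oh(\eps))$ bound anyway.
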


\paragraph*{Combining the Pieces: Coreset for \boldmath$k$\unboldmath-Median Clustering.}
Recall that we receive the elements $x_1, \ldots, x_n$ of $X$ as a stream. 

For each $x_i$, we perform the following steps (corresponding to Steps 1–4 of the high level description):
\smallskip
\begin{enumerate}
\item With \cref{cor:streaming}, we obtain $x_i' = \kappa(x_i) \in \{0,1\}^{r \ell}$, where $r = \Oh(\eps^{-2} \log^2 M)$. Here,~$M$ is a parameter that must satisfy $\absolute{\Sigma} \le M = 0 \pmod 3$ and $\eps > 1/M$.
Note that for $\eps \in (n^{-\Oh(1)}, 1/2)$, i.e., if there exists a constant $c > 0$ such that $\eps > n^{-c}$, we can choose $M = 3 \cdot \max(\ceil{n^c}, \absolute{\Sigma})$, which implies $\log M = \Oh(\log n)$.
\item Immediately after computing $x_i'$, we obtain $x_i'' = \tau(x_i') := \gamma(x_i') / {\sqrt{\frac r 2 - \frac {\eps r} 6}} \in \R^{\Oh(\log (n) / \eps^2)}$, where the factor $1/{\sqrt{\frac r 2 - \frac {\eps r} 6}}$ is the square root of the scaling factor from \cref{eq:alphabet_reduction}, and $\gamma$ is the terminal embedding for the set $X' = \{x_1', x_2',\ldots, x_n'\}$ with parameter $\eps$ (\cref{fact:terminal}). For computing $\gamma$, we multiply $x_i'$ by the matrix from \cref{fact:JL} and append a zero.
\item Immediately after computing $x_i''$, we feed it into the algorithm of \cref{fact:means_clustering} with parameter~$\eps$, which outputs an $\eps$-coreset $(U, w_U)$ for $k$-means clustering.
\item If $\absolute{U} > \eps^{-6}k \log k \log \frac{k}{\eps\delta}$ for a parameter $\delta \in (0, \frac 1 2)$, then we use $(U, w_U)$ to compute a smaller $\eps$-coreset in the following way. First, note that the description of the algorithm of \cref{thm:small-coreset-offline} in~\cite{braverman_et_al:LIPIcs.APPROX-RANDOM.2019.62} implies that for all $u \in U$ its weight $w_U(u)$ is a ratio $a/b$, where $a$ and $b$ are integers of order $n^{\Oh(1)}$. This lets us apply \cref{lem:integer-reweighting} to scale the weights so that they become integers in $n^{\Oh(1)}$, apply \cref{thm:small-coreset-offline} to compute an $\eps/2$-coreset in the scaled space, and then scale the weights back. We denote by $(Q, w_Q)$ the resulting set. Otherwise, we simply let $(Q, w_Q) = (U, w_U)$. Note that all points in $U$ are integer vectors multiplied by a fixed real number, so we may assume that we know a lower bound on the smallest distance between two points in $U$.
\end{enumerate}
\smallskip
For brevity, let $\mu(x) = \tau(\kappa(x))$ for a point $x \in \Sigma^\ell$, and let $\mu(X) = \{\mu(x) : x \in X\}$. 
In principle, the set $(Q, w_Q)$ is the desired coreset, but it resides in the space $\R^{\Oh(\log (n) / \eps^2)}$ \emph{after} applying the reductions in Steps 1 and 2. Hence, we show that we can indeed maintain the relevant original points from $\Sigma^\ell$:

\begin{lemma}\label{lem:store_original_space_points}
    We can maintain a set $S \subseteq \{x_1, \dots, x_i\}$ such that $Q = \{\mu(s) : s \in S \}$ in $\Oh(\eps^{-2}k\ell \log k \log n)$ space (and without time overhead).
\end{lemma}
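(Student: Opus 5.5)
We maintain, next to the Euclidean coreset $(U,w_U)$ produced in Step~3 and the set $(Q,w_Q)$ produced in Step~4, a dictionary that maps each point of $U$ to one original string in $\Sigma^\ell$ that embeds onto it. The argument rests on one structural property of the algorithm of \cref{fact:means_clustering}: the points it stores come from the input stream. Its internal buffers, such as the merge-and-reduce buckets or the sampled sets, only ever contain points it was fed, possibly with new weights. No new points are synthesized. We first check this in the description in~\cite{braverman_et_al:LIPIcs.APPROX-RANDOM.2019.62}. Every update then has the same form: the newly arrived point $\mu(x_i)$ is inserted into the maintained set, or some maintained points are deleted, or weights change.

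The plan has three steps.

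First, each time the algorithm of \cref{fact:means_clustering} stores $\mu(x_i)$ in any of its internal structures, we also store the string $x_i$ next to it. The string $x_i$ is still available at that moment, because Steps 1--3 are applied immediately after $x_i$ arrives. Whenever the algorithm discards a point, we discard the attached string too. This adds only $\Oh(1)$ work for every operation on a stored point, plus $\Oh(\ell)$ time to copy $x_i$. The copy is dominated by the $\Omega(\ell)$ time already needed to compute $\kappa(x_i)$, so there is no asymptotic time overhead.

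Second, we handle Step~4. \cref{thm:small-coreset-offline} guarantees that $Q\subseteq U$, and the integer reweighting only changes weights. So every $q\in Q$ carries a pointer to some $u\in U$ with $u=q$, and we let $S$ be the set of strings attached to these points. If two distinct strings map to the same vector, we keep one representative, which is enough to get $Q=\{\mu(s):s\in S\}$. It is important that the map from $U$ back to $\Sigma^\ell$ is fixed at the moment a point enters the structure, so ties never create inconsistencies.

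Third, we bound the space. The strings stored at any time are in bijection with the points held in the internal memory of the algorithm of \cref{fact:means_clustering}. That memory is $\Oh(\eps^{-2}dk\log k\log^2 n)$ words with $d=\Oh(\eps^{-2}\log n)$, which gives $\Oh(\eps^{-2}k\log k\log n)$ points, with the extra logarithmic factor coming from the $d$ words per point. Each string costs $\Oh(\ell)$ words, so the total is $\Oh(\eps^{-2}k\ell\log k\log n)$.

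The main obstacle is justifying this point count. We must argue from the internals of~\cite{braverman_et_al:LIPIcs.APPROX-RANDOM.2019.62} that at every moment the buffers hold $\Oh(\eps^{-2}k\log k\log n)$ stream points, and not just that the output coreset has this size. We must also confirm that the algorithm never creates points such as centroids. If it did, we would instead use the fact that its output is required to be a subset of the input, and track only the points that survive.
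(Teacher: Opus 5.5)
\textbf{Gap: the space bound is off by a $\log n$ factor.} Your bookkeeping is the same as the paper's. You attach $x_i$ to $\mu(x_i)$ when it enters the algorithm of \cref{fact:means_clustering}, drop it when that point is deleted, and recover $S$ through $Q \subseteq U$. That part is fine.

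You obtain the number of stored points by dividing the total memory bound of \cref{fact:means_clustering} by the $d$ words per point. This gives $\Oh(\eps^{-2}dk\log k\log^2 n)/d = \Oh(\eps^{-2}k\log k\log^2 n)$ points, not $\Oh(\eps^{-2}k\log k\log n)$. The factor $d$ already appears explicitly in that bound, so it cannot absorb the extra $\log n$. As written, your argument yields only $\Oh(\eps^{-2}k\ell\log k\log^2 n)$ space, which is weaker than the lemma claims. Dividing an aggregate memory bound can never give the sharper count, because it does not separate words spent on points from other data.

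\textbf{Missing ingredient.} What is needed is a direct bound on how many input points the algorithm holds at once. You identify this as the ``main obstacle'' but do not resolve it. The paper resolves it from the internals of~\cite{braverman_et_al:LIPIcs.APPROX-RANDOM.2019.62}:
\begin{itemize}
\item the algorithm maintains a set $\tilde U \supseteq U$ of interesting points, namely the union of the sets $M_y$ in their Algorithm~1;
\item each input point enters $\tilde U$ exactly once, immediately on arrival, and may be deleted later;
\item Lemma~15 there, with $\delta = n^{-c}$, gives $|\tilde U| = \Oh(\eps^{-2}k\log k\log n)$.
\end{itemize}
Multiplying by $\Oh(\ell)$ words per stored string gives the stated bound.

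\textbf{Your fallback does not work.} You suggest tracking only the points that survive into the output. A streaming algorithm cannot tell at arrival time which points will survive, so it would have to keep every candidate string. Bounding those candidates again requires a structural bound of the same kind as $|\tilde U|$.
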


\begin{proof}
    It suffices to show that we can maintain the original space representation of the set $U$, as $Q$ is always a subset of $U$.
    Looking at the implementation of \cref{fact:means_clustering} in \cite{braverman_et_al:LIPIcs.APPROX-RANDOM.2019.62}, the algorithm for maintaining $U$ is as follows: a set $\tilde U$ of ``interesting'' points is maintained (the union of the sets $M_y$ in \cite[Algorithm 1]{braverman_et_al:LIPIcs.APPROX-RANDOM.2019.62}), and we always have $U \subseteq \tilde U$. Notably, each point from $X$ enters the set $\tilde U$ exactly once (immediately after its arrival), and may be deleted from it later. Hence, it suffices to memorize each point $x_i$ on arrival, and discard it as soon as $\mu(x_i)$ is deleted from $\tilde U$. Then, the overall space usage is $\Oh(\ell)$ times the maximum size of $\tilde U$. Finally, we have $\absolute{\tilde U} = \Oh(\eps^{-2}k \log k \log n)$ by \cite[Lemma 15 with $\delta = n^{-c}$ for constant $c$]{braverman_et_al:LIPIcs.APPROX-RANDOM.2019.62}.
    We omit the details of the simple bookkeeping needed to delete each point at the right time.
\end{proof}

The final coreset in the original space, reported after each arriving point, is $(S, w_S)$ with~$S$ as defined above, and $w_S(s) = w_Q(\mu(s))$. The algorithm outputs $Q$ alongside $(S, w_S)$.
Towards the proof of \cref{cor:small-streaming-coreset}, we show that $(S, w_S)$ is indeed a coreset. Intuitively, this is the case because $\mu(x) = \tau(\kappa(x))$, where $\kappa$ maps the Hamming space to $L_2$ space with a small distortion, and $\tau$ reduces the dimension of the $L_2$ space while preserving, up to a small distortion, distances between points in $\kappa(X)$ and the rest of the space. Thus, the preimage of the coreset for $\mu(X)$ should result in a coreset for~$X$. 

\begin{restatable}{lemma}{coresetanalysis}\label{lm:coreset_analysis}
The set $(S, w_S)$ is an $\Oh(\eps)$-coreset for $k$-median clustering for the set $\{x_1, \dots, x_i\}$ with probability at least~$1-2\delta$.
\end{restatable}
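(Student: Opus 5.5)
Fix a center set $C \subseteq \Sigma^\ell$ with $\absolute{C} \le k$; the plan is to sandwich $\nu_{1,w_S}(S,C)$ against $\nu_1(X,C)$ with $X=\{x_1,\dots,x_i\}$, passing through the embedded space. Note that $\mu(C)$ is a set of at most $k$ points in $\R^{m'}$. So once we know that $(Q,w_Q)$ is an $\Oh(\eps)$-coreset for $k$-means of $\mu(X)$, we can apply it with the center set $\mu(C)$. It then remains to compare $\euclsq(\mu(x),\mu(C))$ with $\hd(x,C)$ for each input point $x$. This includes the coreset points $s\in S$, since $S\subseteq X$; this is why it matters that the coreset is a subset of the input.

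\textbf{Step 1 (pointwise distortion).} For $x \in X$ and $c \in C$, \cref{cor:streaming} gives
\[
\hd(x,c) \le \euclsq(\kappa(x),\kappa(c))\big/\bigl(\tfrac r2-\tfrac{\eps r}6\bigr) \le (1+\eps)\hd(x,c),
\]
because the squared Euclidean distance equals the Hamming distance on binary vectors. Since $\kappa(x)\in X'$ is a terminal, \cref{fact:terminal} applies with $y=\kappa(c)$ an arbitrary point, provided the JL map satisfies \cref{eq:jl}. By \cref{fact:JL} with $\beta$ a suitable constant, this holds w.h.p., failing with probability $n^{-\beta}\le\delta$ once $\delta>n^{-\Oh(1)}$ and $\beta$ is chosen large enough. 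It then yields
\[
\euclsq(\kappa(x),\kappa(c)) \le \euclsq(\gamma\kappa(x),\gamma\kappa(c)) \le (1+\eps)^2\euclsq(\kappa(x),\kappa(c)).
\]
After the scaling in $\tau$, this gives $\hd(x,c)\le \euclsq(\mu(x),\mu(c))\le(1+\eps)^3\hd(x,c)$. Taking the minimum over $c\in C$ preserves these inequalities, so $\hd(x,C)\le\euclsq(\mu(x),\mu(C))\le(1+\Oh(\eps))\hd(x,C)$.

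\textbf{Step 2 (coreset in embedded space).} By \cref{fact:means_clustering}, $(U,w_U)$ is an $\eps$-coreset of $\mu(X)$ w.h.p. If Step 4 runs, \cref{thm:small-coreset-offline} together with the integer reweighting of \cref{lem:integer-reweighting} makes $(Q,w_Q)$ an $\eps/2$-coreset of $(U,w_U)$ with probability $1-\delta$. Rescaling all weights by a common factor preserves the coreset property, since both sides of the inequality scale identically. Composing the two guarantees gives
\[
\absolute{\nu_{2,w_Q}(Q,\mu(C))-\nu_2(\mu(X),\mu(C))}\le \bigl((1+\eps)(1+\eps/2)-1\bigr)\nu_2(\mu(X),\mu(C)) = \Oh(\eps)\,\nu_2(\mu(X),\mu(C)).
\]
A union bound over the failure of the JL/terminal step, the w.h.p. failure of \cref{fact:means_clustering}, and the $\delta$ failure of Step 4 gives total failure probability at most $2\delta$.

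\textbf{Step 3 (combine).} Apply Step 1 termwise to both sums. Since $w_S(s)=w_Q(\mu(s))$ and all weights are positive,
\[
\nu_{1,w_S}(S,C)\le\nu_{2,w_Q}(Q,\mu(C))\le(1+\Oh(\eps))\,\nu_{1,w_S}(S,C),
\]
and the analogous sandwich holds between $\nu_1(X,C)$ and $\nu_2(\mu(X),\mu(C))$. Chaining these with Step 2 gives $(1-\Oh(\eps))\nu_1(X,C)\le\nu_{1,w_S}(S,C)\le(1+\Oh(\eps))\nu_1(X,C)$ for every $C$.

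\textbf{Main obstacle.} The delicate point is that the coreset guarantee must hold simultaneously for all $C\subseteq\Sigma^\ell$, even though the JL map only preserves distances among the $n$ input points. This is handled by the terminal embedding: its guarantee covers distances from any terminal to arbitrary points, and it is a single random event independent of $C$. Two further points need checking. First, $\mu$ need not be injective, so distinct $s\in S$ could map to the same point of $Q$; the bookkeeping of \cref{lem:store_original_space_points} keeps one preimage per stored point, which resolves this. Second, the ``w.h.p.'' failures must be absorbed into $\delta$, which requires $\delta\ge n^{-\Oh(1)}$ with the constant in the high-probability bounds chosen large enough.
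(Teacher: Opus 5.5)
Your proposal is correct and follows essentially the same route as the paper's proof. It uses a pointwise bound $\hd(x,c)\le\euclsq(\mu(x),\mu(c))\le(1+\eps)^3\hd(x,c)$ for terminals $x$ against arbitrary centers, composes the two coreset guarantees in the embedded space at the center set $\mu(C)$, and closes with a termwise sandwich that relies on $S\subseteq X$. Your lower bound is slightly sharper than the paper's $(1-\eps)^2$, because you use that terminal embeddings are non-contracting. One harmless slip: by the rounding in \cref{lem:integer-reweighting}, an $\eps/2$-coreset of the rescaled instance yields an $\eps$-coreset of $(U,w_U)$, not an $\eps/2$-coreset, and this does not affect the $\Oh(\eps)$ conclusion.
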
 

We will also need the following auxiliary result: 

\begin{restatable}{lemma}{weightcoresets}
\label{clm:weight-of-coresets}
Let $(X,\weight)$ be a weighted subset of $\R^d$ and $\eps \in (0, 1/2)$. If $(U, \weight_U)$ is an $\eps$-coreset of $X$ for $k$-means clustering under the $\eucl$ distance and $U \subseteq X$, then $\weight_U(U) \leq 6\weight(X)$.
\end{restatable}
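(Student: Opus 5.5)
Since the coreset guarantee quantifies over all center sets $C$ with $\absolute{C}\le k$, I will choose a few clever center sets and compare $\nu_{2,\weight_U}(U,C)$ with $\nu_{2,\weight}(X,C)$. The idea is to place centers far away so that every point pays roughly the same distance; the cost then becomes proportional to the total weight.

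I first try a single center $c$. For every $x$ we have $\euclsq(x,c) \in [(R-D)^2,(R+D)^2]$, where $D$ is the diameter of $X$ and $\eucl(c,x_0)=R$ for a fixed $x_0\in X$. Since $U \subseteq X$, the same bounds hold for $U$. The coreset inequality gives
\[
\weight_U(U)(R-D)^2 \le \nu_{2,\weight_U}(U,\{c\}) \le (1+\eps)\,\nu_{2,\weight}(X,\{c\}) \le (1+\eps)\,\weight(X)(R+D)^2 .
\]
Letting $R\to\infty$ yields $\weight_U(U)\le(1+\eps)\weight(X)\le 6\weight(X)$. Only the upper side of the coreset inequality is used, and only one center, so the condition $\absolute{C}\le k$ is harmless; $C=\{c\}$ is allowed since $k\ge1$.

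The main subtlety is whether the limit is legitimate. It is, because the coreset property holds for every $C$ and $X$ is finite, hence bounded, so $D<\infty$. If one prefers to avoid limits, one can choose $R = 3D$ (when $D>0$), which gives the ratio $\bigl(\frac{R+D}{R-D}\bigr)^2 = 4$. Then $\weight_U(U)\le 4(1+\eps)\weight(X) < 6\weight(X)$ for $\eps<1/2$, which explains the constant $6$. The degenerate case $D=0$ (a single point) is immediate with any $c\ne x_0$.

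In this argument, $U\subseteq X$ is used only to guarantee that all points of $U$ lie within distance $D$ of $x_0$. This is exactly where the hypothesis is needed: without it, the coreset could contain far-away points carrying arbitrary weight.
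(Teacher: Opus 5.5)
Your proof is correct and is essentially the paper's argument: place a single center far from $X$ (the paper takes $q$ at distance $2\Delta$ from $X$, you take distance $R$ from a point $x_0\in X$), use $U\subseteq X$ and the triangle inequality to bound every squared distance to it from both sides, and apply the upper coreset inequality to that one-point center set. Your version is in fact slightly tighter than the paper's. Your ratio $\bigl(\frac{R+D}{R-D}\bigr)^2$ is computed correctly for squared distances, whereas the paper's sandwich $[\Delta^2,3\Delta^2]$ should be $[4\Delta^2,9\Delta^2]$, which still yields the constant $6$. You also treat the degenerate case $D=0$, where the paper's division by $\Delta^2$ breaks, and letting $R\to\infty$ even gives $\weight_U(U)\le(1+\eps)\weight(X)$.
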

\begin{proof}
Let $\Delta$ be the diameter of $X$, i.e., the maximal distance between a pair of points in~$X$. Let $q \in \R^{d}$ such that $\eucl(X, \{q\}) = 2\Delta$. We have $\Delta^2 w(X) \leq \nu(X, \{q\}) \leq 3\Delta^2 \weight(X)$ and $\Delta^2 \weight_U(U) \leq \nu(U, \{q\}) \leq 3\Delta^2 \weight_U(U)$, where the lower bounds are due to the definition of $\nu$ for $z = 2$, and the upper bounds are due to the triangle inequality. Furthermore, since $(U,\weight_U)$ is an $\eps$-coreset for $k$-means clustering, we have 
\[\Delta^2 \weight_U(U) \leq \nu(U, \{q\}) \leq (1+\eps)\cdot  \nu(X, \{q\})\leq 3\cdot (1+\eps) \cdot \Delta^2 \weight(X) < 6 \cdot \Delta^2 \weight(X)\]
(by considering a set $C = \{q\}$ in \cref{def:coreset}). As a result, $\weight_U(U) \leq 6\weight(X)$.
\end{proof}

\coresetfullversion
\begin{proof}
If the algorithm does not fail, then the obtained set is an $\Oh(\eps)$-coreset by \cref{lm:coreset_analysis}. Running the algorithm with parameter $\eps / c$ for a large enough constant $c$ instead of $\eps$ does not affect the claimed complexities and results in an $\eps$-coreset. By \cref{lm:coreset_analysis}, the failure probability after each point is at most $2\delta$. By running the algorithm with $\delta/2$ instead of $\delta$, we again do not affect the complexity and succeed with probability at least $1 - \delta$.

For the space complexity of the algorithm, we consider each component separately:
\smallskip
\begin{itemize}
    \item The matrix for Karloff's embedding can be stored in $\Oh(\eps^{-2} \log n)$ space (\cref{cor:streaming}) and the matrix for the Johnson--Lindenstrauss transform in $\Oh(\eps^{-2} \ell \log^3 n)$ space (\cref{fact:JL}).
    \item Computing a point $x_i''$ requires $\Oh(\eps^{-2} \log n)$ space (note that $x_i'$ does not have to be stored explicitly and can be generated on the fly in $\Oh(\log n)$ time per character, see \cref{cor:streaming}).
    \item The streaming coreset construction algorithm \cref{fact:means_clustering} uses $\Oh(\eps^{-3} k \log k \log^3 n)$ space.
    \item We need $\Oh(\eps^{-2} k\ell \log k \log n)$ space to maintain the interesting points with \cref{lem:store_original_space_points}.
    \item The offline coreset construction (\cref{thm:small-coreset-offline}) applied on $(U, \weight_U)$ uses space
    \[
\begin{aligned}
\Oh(&\eps^{-4} k^2 \log k \log^2 n + \eps^{-4} k \log k \log^2 n \log \frac{n}{\delta}  + k^2 \log^2 \frac{1}{\delta} \log^2 (\eps^{-2} k \log k \log n)  \\
&+ \eps^{-2} k \log k \log^2 n \max(k, \log n)) = \Oh\!\left(\eps^{-4} k^2 \log^2 \!\left(\frac{1}{\delta}\right)\log^4 n\right),
\end{aligned}
\]
    using $w(U) = \Oh(w(X)) = \Oh(n)$ (\cref{clm:weight-of-coresets}) and $k \le n$.
\end{itemize}
\smallskip
Hence, the algorithm uses $\Oh(\eps^{-2} \ell (k \log k + \log n) \log^2 n + \eps^{-4} k^2 \log^2 (\frac{1}{\delta}) \log^4 n)$ space. 
Finally, we analyze the update time of the algorithm:
\begin{itemize}
    \item Computing a point $x_i''$ requires $\Oh(\eps^{-2} \ell \log^4 n)$ time (recall that $x_i'$ is generated on the fly in $\Oh(\log n)$ time per character, see \cref{cor:streaming}); 
    \item The streaming coreset construction (\cref{fact:means_clustering}) uses $\Oh(\eps^{-4} k \log k \log^4 n)$ update time;
    \item The offline coreset construction (\cref{thm:small-coreset-offline}) applied on $(U, \weight_U)$ runs in time $\Oh\!\left(\eps^{-4} k^2 \log^2 \!\left(\frac{1}{\delta}\right)\log^4 n\right)$.
\end{itemize}
\smallskip
Hence, the total update time is $\Oh((\eps^{-2} \ell + \eps^{-4} k^2 \log^2 (\frac{1}{\delta})) \cdot \log^4 n)$, concluding the proof.\footnote{We apply the terminal embedding to remove the dependence on the original dimension in the time complexity of the coreset constructions. One could use a more efficient (and more complicated) dimension reduction (see~\cite{10.1145/3313276.3316350} and references therein), but this would only affect logarithmic factors in the runtime, which we do not attempt to optimize.}
\end{proof}

\section{\texorpdfstring{A Streaming FPT Algorithm for $k$-Median Clustering}{A Streaming FPT Algorithm for k-Median Clustering}}
We start with a $(1 + \Oh(\eps))$-approximation algorithm for $k$-median clustering, assuming that we are provided with an algorithm that constructs an $\eps$-coreset for $k$-median clustering. 

\label{sec:clustering}

\def\smalldelta{\nicefrac{1}{10}}
\def\bigdelta{n^{-c}}
\def\totaldelta{n^{-c+1}}
\begin{theorem}\label{claim:oneplusepskmeans}
Let $\ell,k$ be positive integers and $X$ be an unweighted set of $n$ points in $(\Sigma^\ell, \hd)$, where $\Sigma$ is an alphabet of size polynomial in $n$. Let $\eps \in (n^{-\Oh(1)}, 1/2)$. Suppose that for every $\delta \in (n^{-\Oh(1)},1/2)$, an $\eps$-coreset for $k$-median clustering can be maintained in streaming as follows. The coreset is of size at most 
$f(\delta) \geq k$, 
the update time per arriving point is at most $t(\delta)$, and the space does not exceed $s(\delta) \geq \ell \cdot f(\delta)$ at any time. After the arrival of any given point, the coreset is correct with probability at least $1-\delta$. 

There is a streaming algorithm for ${(1+\Oh(\eps))}$-approximate $k$-median clustering on~$X$, reporting a solution for the already seen points after each arriving point. It has update time $\Oh((t(\smalldelta) + 
\ell \cdot f(\smalldelta) \cdot k^{f(\smalldelta)} + k \ell \cdot f(\bigdelta)) \cdot \log n + t(\bigdelta))$, space complexity ${\Oh(s(\smalldelta) \cdot \log n + s(\bigdelta))}$, and all reported solutions are correct with high probability.
\end{theorem}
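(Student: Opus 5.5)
We run, in parallel, $m = \Theta(\log n)$ independent copies of the assumed coreset maintenance with failure parameter $\delta = \smalldelta$ (the ``small'' coresets $(S_1,w_1),\dots,(S_m,w_m)$, each of size at most $f(\smalldelta)$). We also run one copy with $\delta = \bigdelta$ (the ``large'' coreset $(S^*,w^*)$, of size at most $f(\bigdelta)$). Maintaining all of them costs $\Oh(t(\smalldelta)\log n + t(\bigdelta))$ update time and $\Oh(s(\smalldelta)\log n + s(\bigdelta))$ space, which matches the claimed bounds. The assumption $s(\delta)\ge \ell f(\delta)$ ensures that we may assume each copy keeps its coreset points explicitly as strings in $\Sigma^\ell$.

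After each arrival, we process each small coreset $S_j$ by brute force. For every partition of $S_j$ into at most $k$ parts (at most $k^{f(\smalldelta)}$ assignments), we compute for each part $P$ an optimal weighted $1$-median in Hamming space. This is the coordinatewise weighted plurality, found in $\Oh(\ell\,|P|)$ time by taking, in each position, the character of maximum total weight. Summed over parts this costs $\Oh(\ell f(\smalldelta))$, and the weighted cost of the clustering on $S_j$ is obtained in the same time. Keeping the best assignment gives a center set $C_j$ that is optimal for $(S_j,w_j)$. Optimality over partitions is the step to check: every center set $C$ induces a partition by nearest center, and re-optimizing each part's center only lowers the cost, so the minimum over partitions equals $\min_{|C|\le k}\nu(S_j,C)$. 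We then evaluate each $C_j$ on the large coreset, $\nu_{w^*}(S^*,C_j)$, in $\Oh(k\ell f(\bigdelta))$ time. We report the $C_j$ minimizing this value. The total time over $j$ is $\Oh((\ell f(\smalldelta)k^{f(\smalldelta)} + k\ell f(\bigdelta))\log n)$, which matches the claim.

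Correctness relies on the large coreset serving as a certificate. With probability $1-\bigdelta$ the large coreset is correct. Independently, the probability that all $m$ small coresets fail is at most $10^{-m}\le n^{-c}$ for $m = c\log_{10} n$. Let $C^*$ be an optimal solution for $X$, and let $j_0$ be an index with a correct small coreset. Then
\[
\nu(S_{j_0},C_{j_0}) \le \nu(S_{j_0},C^*) \le (1+\eps)\OPT,
\]
so
\[
\nu(X,C_{j_0}) \le \frac{1+\eps}{1-\eps}\OPT .
\]
For the reported $C_j$, the large coreset gives
\[
\nu(X,C_j) \le \frac{\nu(S^*,C_j)}{1-\eps} \le \frac{\nu(S^*,C_{j_0})}{1-\eps} \le \frac{1+\eps}{1-\eps}\,\nu(X,C_{j_0}) = (1+\Oh(\eps))\OPT,
\]
using $\eps<1/2$.

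The main subtlety is that the guarantee must hold at every point of the stream. For this we take a union bound over the $n$ arrivals: per-time failure is $\Oh(n^{-c})$, so the total is $\Oh(n^{-c+1})$, which is a high-probability guarantee after adjusting $c$. This is where the per-arrival correctness assumption of the coreset maintenance is used. Note that the small coresets at different times need not be independent of each other; independence across the $m$ copies at a fixed time suffices.
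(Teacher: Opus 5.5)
Your proposal is correct and follows essentially the same approach as the paper. It uses $\Theta(\log n)$ independent small coresets at $\delta = 1/10$ solved by brute force over partitions with weighted-plurality centers, one large coreset at $\delta = n^{-c}$ serving as a certificate to select among the candidates, and a union bound over the $n$ arrivals. The only minor difference is that you bound the selected solution via $\nu(X,C_{j_0})$ and compare directly to $\OPT(X)$, whereas the paper routes the argument through an optimal solution $\OPT(U)$ of the large coreset, so you obtain $\bigl(\frac{1+\eps}{1-\eps}\bigr)^2$ instead of $\bigl(\frac{1+\eps}{1-\eps}\bigr)^3$.
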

\begin{proof}
Fix an integer constant $c > 0$, with the goal of achieving success probability $1 - n^{-c + 2}$. We use the streaming algorithm to maintain one $\eps$-coreset $(U,\weight_U)$ with $\delta = \bigdelta$, and $m := c\lceil \log_{2} (n) \rceil$ different $\eps$-coresets $(U_1,\weight_{U_1}), \dots, (U_m,\weight_{U_m})$ with $\delta = \smalldelta$.
	After every arriving point, for every $i \in \{1, \dots , m\}$, we solve the $k$-median clustering problem on $U_i$ by brute-force: We enumerate all the partitions of $\U_i$ into $k$ disjoint subsets $\U_{i,1}, \dots, \U_{i,k}$. 
    These~$k$ subsets represent the clusters induced by $k$ implied centers $u_{i,1}, \dots ,u_{i,k}$. By definition, $u_{i,j} := \argmin_{u \in \Sigma^\ell} \sum_{ p\in \U_{i,j}} w_{U_i}(p) \cdot \hd(p, u)$, so we can compute the centers $u_{i,1}, \dots ,u_{i,k}$ as well as their clustering cost in time $\Oh(\ell \cdot f(\smalldelta))$ with a majority vote character by character. We then define $C_i := \{c_{i,1}, \dots, c_{i,k}\}$ to be the set of centers with the smallest clustering cost among all partitions of $U_i$. (We do not need to store all partitions of $U_i$ at the same time; we maintain $C_i$ directly while enumerating the partitions.)
Finally, we evaluate the clustering cost of $\U$ under each set of centers $C_1, \dots, C_m$, and we define the final set $C^\star$ of $k$ centers so that $\nu( \U, C^\star) = \min_{j \in [m]} \nu(\U, C_j)$.

\textsf{\bfseries Time.}
We maintain one coreset in $t(\bigdelta)$ update time, and $m=\Oh(\log n)$ coresets in $\Oh(t(\smalldelta)\log n)$ update time. For each $i$, due to $\absolute{U_i}\le f(\smalldelta)$, there are $\Oh(k^{f(\smalldelta)})$ possible partitions, and for each we compute the $k$ centers by coordinate-wise weighted majority vote in $\Oh(\ell\cdot f(\smalldelta))$ time, giving $\Oh(\ell\cdot f(\smalldelta)\cdot k^{f(\smalldelta)})$ time per $i$.
Finally, evaluating $\nu(U,C_i)$ for each candidate $C_i$ and the larger coreset of size $\absolute{U}\le f(\bigdelta)$ costs $\Oh(k\ell\cdot f(\bigdelta))$ time per $i$. Summing over the $\Oh(\log n)$ choices of $i$ results in the claimed update bound.

\textsf{\bfseries Space.}
Maintaining and storing the coresets takes $\Oh(s(\smalldelta)\log n + s(\bigdelta))$ space.
This dominates the space complexity, as it exceeds the $\Oh(k\ell \log n) \subseteq \Oh(s(\smalldelta) \cdot \log n)$ space needed to maintain the sets $C_i$.

\textsf{\bfseries Correctness.} Suppose that $U$ and at least one of $U_1, \dots, U_m$, say $U_j$, are correct coresets of~$X$. Furthermore, let $\OPT(U)$ be an optimal solution of $k$-median clustering with input $U$, and $\OPT(X)$ be an optimal solution of $k$-median clustering with input $X$. We have, by the definition of coresets, $\nu(U,C_{j}) \leq (1+\eps) \cdot \nu(X,C_{j})$ and $\nu(X,C_{j}) \leq \frac{1}{1-\eps} \cdot \nu(U_j,C_{j})$. Hence,
\begin{equation}\label{eq:star-1}
\nu(U,C_{j}) \leq (1+\eps) \cdot \nu(X,C_{j}) \leq \frac{1+\eps}{1-\eps} \cdot \nu(U_j,C_{j}).
\end{equation}
Furthermore, by optimality of $C_j$, we have 
\begin{equation}\label{eq:star-2}
\nu(U_j, C_j) \leq \nu(U_j, \OPT(U))
\end{equation} 
Again by the definition of coresets, we obtain 
\begin{equation}\label{eq:star-3}
\nu(U_j,\OPT(U)) \leq (1+\eps) \cdot \nu(X,\OPT(U)) \leq\frac{1+\eps}{1-\eps} \cdot \nu(U,\OPT(U)).
\end{equation}
Now recall that $\nu(\U, C^\star) = \min_{j' \in [m]} \nu(\U, C_{j'})$. We combine Equations~\ref{eq:star-1}, \ref{eq:star-2}, \ref{eq:star-3} to obtain 
\begin{equation}\label{star}
\nu(U,C^{\star}) \leq \nu(U,C_{j}) \leq  \left(\frac{1+\eps}{1-\eps}\right)^2 \cdot \nu(U,\OPT(U)). 
\end{equation}
We now apply the same reasoning for $X$: 
\begin{align*}
\nu(X,C^{\star}) & \leq \frac{1}{1-\eps} \cdot \nu(U,C^{\star})  &   & \text{(\cref{def:coreset})}\\
& \leq \frac{(1+\eps)^2}{(1-\eps)^3} \cdot \nu(U,\OPT(U))  &   & \text{(\cref{star})} \\
			& \leq \frac{(1+\eps)^2}{(1-\eps)^3} \cdot \nu(U,\OPT(X))  &   & \text{(optimality of $\OPT(U)$)}\\
			& \leq \frac{(1+\eps)^3}{(1-\eps)^3} \cdot \nu(X,\OPT(X))&   & \text{(\cref{def:coreset})}\\
			& \leq (1+\Oh(\eps)) \cdot  \nu(X,\texttt{OPT}(X)) &   & \text{for $\eps \leq 1/2$}
\end{align*}

Finally, observe that the probability of all $m$ small coresets failing is at most $(\smalldelta)^{m} \leq n^{-c}$, and the probability of the large coreset failing is at most $n^{-c}$. Hence, by the union bound, there is at least one point after which the construction fails with probability at most $n \cdot (n^{-c} + n^{-c}) \leq 2n^{-c + 1} \leq n^{-c + 2}$.
\end{proof}

Combining \cref{claim:oneplusepskmeans} with the coreset construction of \cref{cor:small-streaming-coreset}, we obtain:

\kmedianHamfull
\begin{proof}
By \cref{cor:small-streaming-coreset}, we have the following:
\[
\begin{aligned}
\begin{aligned}
f(\smalldelta) &= \Oh\!\left(\eps^{-6}k \log k \log (k/\eps) \right)\\
s(\smalldelta) &= \tOh_\eps(\ell k + k^2)\\
t(\smalldelta) &= \tOh_\eps(\ell + k^2)
\end{aligned}
\qquad\qquad
\begin{aligned}
f(\bigdelta) &= \Oh\!\left(\eps^{-6}k \log k \log (kn/\eps) \right)\\
s(\bigdelta) &= \tOh_\eps(\ell k + k^2 \log^2 n)\\
t(\bigdelta) &= \tOh_\eps(\ell + k^2 \log^2 n)
\end{aligned}
\end{aligned}
\]
We hence obtain by \cref{claim:oneplusepskmeans} that the space complexity of the algorithm is
\[
\Oh(s(\smalldelta) \cdot \log n + s(\bigdelta)) = \tOh_\eps((\ell k + k^2) \cdot \log n + \ell k + k^2 \log^2 n) = \tOh_\eps(\ell k + k^2)
\]
and the update time is 
\[
\begin{aligned}
&\Oh((t(\smalldelta) + 
\ell \cdot f(\smalldelta) \cdot k^{f(\smalldelta)}) \cdot \log n + t(\bigdelta) + k \ell \cdot f(\bigdelta)) = \\
&\tOh_\eps\Bigl(\ell + k^2
   + \ell \cdot (\eps^{-6}k \log k \log \tfrac{k}{\eps})
     \cdot k^{\Oh(\eps^{-6}k \log k \log \frac{k}{\eps})} \log n \\
&\qquad {}+ k \ell \cdot (\eps^{-6}k \log k \log \tfrac{kn}{\eps}) \Bigr) = \\
&\tOh_\eps(\ell \cdot 2^{\poly(\eps^{-1},k)}).
\qedhere
\end{aligned}
\]

\end{proof}

\section[Proof of Theorem~\ref{thm:small-coreset-offline}]{Proof of \cref{thm:small-coreset-offline}}
\label{sec:coreset-weighted}

\label{sec:hv}

In this section, we modify the algorithm of Huang and Vishnoi~\cite{HuangV20} to show \cref{thm:small-coreset-offline}.
%\smallcoresetoffline*
Our proof adapts their algorithm for \emph{unweighted} inputs to work on \emph{weighted} ones. The algorithm consists of two rounds referred to as importance sampling, but we only modify the first round, as the second round can already handle weighted inputs (see \cref{thm:reduction}). We give pseudocode of the first round in Algorithm~\ref{vh-alg}.

\begin{fact}[{\cite{HuangV20}}]
	\label{thm:reduction}
	Let $X$ be a set of $n$ points in $(\R^\ell, \eucl)$, $\eps,\delta \in (0,1/2)$ and $k$ be a positive integer. 
	Suppose that Algorithm~\ref{vh-alg} constructs an $\eps$-coreset for $k$-means clustering for $X$ of size $\Oh\left(\eps^{-25}  k^5 \log \frac{k}{\delta}\right)$.
	Then with probability at least $1-\delta/2$, the second round of the Huang--Vishnoi algorithm~\cite{HuangV20} outputs an $\Oh(\eps)$-coreset for $k$-means clustering for $X$ of size $\Oh\left(\eps^{-6}k \log k \log \frac{k}{\eps \delta} \right)$, and uses $\Oh(n\ell)$ time and space.
\end{fact}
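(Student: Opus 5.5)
This Fact is attributed to Huang and Vishnoi, so my plan is to verify that their second-round analysis applies verbatim, rather than to give a new proof. The key observation is that the second round of \cite{HuangV20} never uses that the original input $X$ is unweighted. It only uses the output $(D,w_D)$ of the first round, which is already a weighted $\eps$-coreset of size $\Oh(\eps^{-25}k^5\log\frac{k}{\delta})$. I would therefore open their Theorem~5.1 and check that the second-round argument is phrased for a weighted point set $(D,w_D)$.

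Concretely, the second round does the following on $(D,w_D)$. First, it computes an $\Oh(1)$-approximate $k$-means solution $A$ on the weighted set. Next, it defines sensitivity upper bounds $\sigma(x)$ proportional to $w_D(x)\,\euclsq(x,A)/\nu(D,A) + w_D(x)/w_D(\text{cluster of }x)$. It then samples $N=\Oh(\eps^{-6}k\log k\log\frac{k}{\eps\delta})$ points i.i.d.\ proportionally to $\sigma$, and reweights each sampled point by $w_D(x)/(N\,p(x))$. For a weighted set, the standard sensitivity framework (Feldman--Langberg, and Braverman et al.) gives that this sample is an $\eps$-coreset of $(D,w_D)$ with probability $1-\delta/2$.

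For the size bound I would follow Huang--Vishnoi. The sample size is set by the total sensitivity, which is $\Oh(k)$. It also depends on the pseudo-dimension or the number of distinct cost vectors after their "ring" decomposition. Their argument bounds this in terms of $|D|$, which is where $\log|D| = \Oh(\log\frac{k}{\eps\delta})$ enters, instead of a term in $n$. Weights only rescale the functions and do not affect this count. The time and space bounds are $\Oh(|D|\ell k)$ plus the cost of reading $X$ once, which together give $\Oh(n\ell)$ as claimed.

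Finally, coreset errors compose. An $\eps$-coreset of an $\eps$-coreset of $X$ is a $(1+\eps)^2-1=\Oh(\eps)$ coreset of $X$, which follows directly from \cref{def:coreset} by multiplying the two guarantees. The step I expect to need the most care is the bookkeeping of the failure probability. I would condition on the first round having succeeded, as the hypothesis states. The second round then fails with probability at most $\delta/2$, which yields the stated $1-\delta/2$.
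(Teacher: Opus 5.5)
Your plan matches what the paper does. The paper gives no proof of this Fact. It simply invokes the second round of \cite{HuangV20}, relying, as you do, on the observation that this round already operates on the weighted output $(D,w_D)$ of the first round. The following parts of your argument are fine:
\begin{itemize}
\item the weighted sensitivity bounds;
\item the $\log|D|=\Oh(\log\frac{k}{\eps\delta})$ dependence in place of $\log n$;
\item the composition of the two guarantees into a $((1+\eps)^2-1)$-coreset;
\item conditioning on the first round, which leaves a failure probability of $\delta/2$.
\end{itemize}

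\textbf{Gap: the time and space bound.} You charge $\Oh(|D|\ell k)$ for the second round, plus a fresh $\Oh(1)$-approximation on $(D,w_D)$, and then assert that this totals $\Oh(n\ell)$. That does not follow.
\begin{itemize}
\item $D$ consists of $N_1=\Theta(\eps^{-25}k^5\log\frac{k}{\delta})$ samples. Even after merging duplicates it can have $\min(n,N_1)$ distinct points.
\item $\min(n,N_1)\cdot \ell k$ is not $\Oh(n\ell)$ in general. In the paper's application, the input $U$ may even have fewer than $N_1$ points, so without merging the bound is worse still.
\item Recomputing an approximate solution on $D$, e.g.\ via \cref{fact:constantfactor}, adds further cost of the same kind.
\end{itemize}

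\textbf{Fix.} Do not recompute anything. Reuse $C^\star$, the assignment $c^\star(\cdot)$ and the stored values $\euclsq(x,c^\star(x))$ from the first round. Since $(D,w_D)$ is an $\eps$-coreset of $X$, we have $\nu(D,C^\star)\le(1+\eps)\,\nu(X,C^\star)$ and $\nu(D,C)\ge(1-\eps)\,\nu(X,C)$ for every $C$. Hence $C^\star$ is still a $\frac{1+\eps}{1-\eps}\gamma$-approximation on $D$. The second-round sensitivities can then be assembled from stored quantities with $\Oh(1)$ work per distinct point of $D$, plus the sampling cost, instead of $\Oh(\ell k)$.

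For the paper's purposes, even an $\Oh(n\ell k)$ bound for this round would suffice, since it is absorbed by the $\Oh(ndk)$ term of \cref{thm:small-coreset-offline}. But the $\Oh(n\ell)$ claim in the Fact is not justified by your accounting as written.
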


\begin{algorithm}[h]
	\KwIn{An unweighted set $X$ of $n$ points in $\R^\ell$, $\eps, \delta \in (0,1/2)$ and an integer $k \geq 1$.}
	\KwOut{A point set $S\subseteq \R^\ell$ together with a weight function $\weight : S\rightarrow \R_{\geq 0}$.} 
	\tcc{The first importance sampling stage}
	$N_1 \leftarrow \Oh\left(\eps^{-25}  k^5 \log \frac{k}{\delta}  \right)$\;
	\label{line:constantfactor} Let $C^\star\subseteq \R^\ell$ (a set of $k$ centers) be a $\gamma$-approximation solution to the $k$-means clustering problem over $(X,\weight)$ ($\gamma = \Oh(1)$)\;
	For $x\in X$, let $c^\star(x)$ be the closest point to $x$ in $C^\star$ (ties are broken arbitrarily). For $c\in C^\star$, let $X_c$ be the set of points $x\in X$ with $c^\star(x)=c$\;
	For $x\in X$, let $\sigma_1(x)\leftarrow 2^{6} \gamma^2 \left(\frac{\euclsq(x,c^\star(x))}{ \nu(X,C^\star)}+\frac{1}{\absolute{X_{c^\star(x)}}}\right)$\;
	Pick a non-uniform random sample $D_1$ of $N_1$ points from $X$, where each $x\in X$ is selected with probability $\frac{\sigma_1(x)}{\sum_{y\in X}\sigma_1(y)}$.
	For $x\in D_1$, let $u(x)\leftarrow \frac{\sum_{y\in X}\sigma_1(y)}{\absolute{D_1}\cdot \sigma_1(x)}$\;
	Output $(D_1,u)$.
	\caption{The first round of Huang and Vishnoi's coreset construction \cite{HuangV20}}
    \label{vh-alg}
\end{algorithm}

We show that the importance sampling framework can be extended to handle weighted inputs by reducing a weighted input to an equivalent unweighted one. Specifically, if the total weight of the set is an integer
$W$, the corresponding unweighted input consists of $W$ points. Since $W$ may be unbounded, a direct application of this reduction could lead to an arbitrarily large running time. We demonstrate how to efficiently execute the algorithm directly on the weighted input, achieving small time and space complexity.

We start by introducing the following notations: given a weighted set $(X, \weight)$ in $\R^\ell$ of size~$n$, where all weights are integers, we define an unweighted set $T(X) \subset \R^{\ell + W}$ by expanding each point $x \in X$ into $\weight(x)$ distinct copies, obtained via small perturbations along additional coordinates.
Formally, define $T(X) = \{\, x_1^{(1)}, \dots, x_1^{(\weight(x_1))}, \dots, x_n^{(1)}, \dots, x_n^{(\weight(x_n))} \,\}$, 
where for each $i \in \{1,\dots, n\}$ and $j \in \{1, \dots, \weight(x_i)\}$, the point $x_i^{(j)} \in \R^{\ell+W}$ is defined so that:
\smallskip
\begin{enumerate}
	\item its first $\ell$ coordinates are identical to those of $x_i$;
	\item its $(\ell + \sum_{h=1}^{i-1} \weight(x_h) + j)$-th coordinate equals $\frac {\eps^{1/2} \eta}{\sqrt{6W}}$, where $\eta$ is the lower bound on the smallest $\eucl$ distance between two points in $X$, which is assumed to be known;
	\item all remaining coordinates are zero.
\end{enumerate} 
\smallskip
Furthermore, let $\alpha: \R^{\ell} \to \R^{\ell+W}$ be the function that appends zeros on the last $W$ dimensions, and let $\pi : \R^{\ell + W} \to \R^{\ell}$ be the projection on the first $\ell$ coordinates. We extend the definition of $\alpha$ and $\pi$ so that they respect multiplicities: if a set $S$ contains multiple points that share the same image under $\pi$, then $\pi(S)$ is interpreted as a weighted set, where the weight of each point is the total multiplicity of its preimages. Note that in particular, $\pi(T(X)) = (X,\weight)$. We have the following claim.
 
 \begin{lemma}\label{clm:projection-coreset}
 	If $(U, \weight_U)$ is an $\eps$-coreset for $k$-means clustering on $T(X)$ under the $\eucl$ distance and $U \subseteq T(X)$, then $\pi(U)$ is an $\Oh(\eps)$-coreset for $k$-means clustering on $X$ under the $\eucl$ distance.
 \end{lemma}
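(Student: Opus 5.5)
The plan is to compare the costs on $T(X)$ and on $X$ for every center set. The only center sets that matter are images $\alpha(C)$ of sets $C \subseteq \R^\ell$ with $\absolute{C}\le k$. So fix such a $C$. The key computation is that for every perturbed copy $x_i^{(j)}$ and every $c \in C$,
\[
\euclsq(x_i^{(j)}, \alpha(c)) = \euclsq(x_i, c) + \theta^2, \qquad \theta := \frac{\eps^{1/2}\eta}{\sqrt{6W}}.
\]
This holds because $\alpha(c)$ is zero on the last $W$ coordinates and $x_i^{(j)}$ has a single nonzero extra coordinate. Taking the minimum over $c$ gives $\euclsq(x_i^{(j)},\alpha(C)) = \euclsq(x_i,C)+\theta^2$. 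Summing over all copies yields
\[
\nu(T(X),\alpha(C)) = \nu_{\weight}(X,C) + W\theta^2 = \nu_{\weight}(X,C) + \eps\eta^2/6 .
\]
Similarly, since $U \subseteq T(X)$, each point of $U$ also satisfies this identity. Grouping the points of $U$ by their projection $\pi$ and summing weights gives
\[
\nu_{\weight_U}(U,\alpha(C)) = \nu(\pi(U),C) + \weight_U(U)\theta^2 .
\]
Here $\pi(U)$ carries the aggregated weights induced by $\weight_U$.

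Next I apply the coreset guarantee to $\alpha(C)$, which is a valid center set in $\R^{\ell+W}$. Subtracting the two identities, I bound
\[
\absolute{\nu(\pi(U),C)-\nu_\weight(X,C)} \le \eps\,\nu(T(X),\alpha(C)) + \absolute{\weight_U(U)-W}\,\theta^2 .
\]
By \cref{clm:weight-of-coresets}, applied to $T(X)$ and $U \subseteq T(X)$, we have $\weight_U(U) \le 6W$. So the error term is at most $\eps\nu_\weight(X,C) + \eps^2\eta^2/6 + 6W\theta^2 + W\theta^2$. Hence the total additive slack is $\Oh(\eps\eta^2)$ beyond $\eps\nu_\weight(X,C)$.

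The main obstacle is converting this additive $\Oh(\eps\eta^2)$ into a multiplicative error relative to $\nu_\weight(X,C)$. Since $n>k$ and $\absolute{C}\le k$, I want some cluster to contain two distinct points $x,x'$ of $X$ sharing the same closest center $c$. Then
\[
\euclsq(x,c)+\euclsq(x',c) \ge \tfrac12 \euclsq(x,x') \ge \eta^2/2 .
\]
With integer weights at least $1$, this gives $\nu_\weight(X,C)\ge \eta^2/2$. The additive slack is therefore $\Oh(\eps)\nu_\weight(X,C)$, and $\pi(U)$ is an $\Oh(\eps)$-coreset.

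One subtlety needs checking: \cref{def:coreset} quantifies over all center sets in $\R^{\ell+W}$, and I only use those of the form $\alpha(C)$. This is exactly the direction needed, so there is no issue. The other subtlety is that the pigeonhole step needs $n = \absolute{X} > k$, which holds under the hypotheses of \cref{thm:small-coreset-offline}.
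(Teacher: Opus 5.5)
Your proposal is correct and follows essentially the same route as the paper. The paper also uses the exact identity $\euclsq(u,\alpha(C)) = \euclsq(\pi(u),C) + \frac{\eps\eta^2}{6W}$ for $u \in T(X)$, applies the coreset guarantee to $\alpha(C)$, bounds $\weight_U(U)\le 6W$ via \cref{clm:weight-of-coresets}, and uses pigeonhole with $\absolute{X}\ge k+1$ to get $\nu(X,C)\ge \eta^2/2$, turning the additive $\Oh(\eps\eta^2)$ slack into an $\Oh(\eps)$ multiplicative error; your explicit handling of the $\absolute{\weight_U(U)-W}\,\theta^2$ term is just a more detailed version of the paper's chain of $\pm$ estimates.
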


\begin{proof}
	First, recall that by \cref{clm:weight-of-coresets}, we have $\weight_U(U) \leq 6W$. 
    Consider arbitrary points $c_1, \dots, c_k \in \R^\ell$. Since $\absolute{X} \geq k+1$ by the condition of the theorem, there are $x,y \in X$ and $i$ such that $x$ and $y$ are both closest to $c_i$. Hence, 
\begin{align*}
&\eta^2 \leq \euclsq(x,y) \le (\eucl(x,c_i)+\eucl(c_i,y))^2 \\
&\leq 2 (\euclsq(x, c_i) + \euclsq(c_i, y)) \leq 2 \sum_{p \in X}\weight(p) \euclsq(p, \{c_1, \dots, c_k \}).
\end{align*}
    Therefore, we have
 \begin{align*}
\sum_{u \in U} w_U(u) \cdot & \euclsq(\pi(u), \{c_1, \dots, c_k \}) \\
&= \sum_{u \in U} w_U(u) \cdot \euclsq(u, \{\alpha(c_1), \dots, \alpha(c_k) \}) - \eta^2 \eps \frac{\weight_U(U)}{6W} \\
&\in (1 \pm \eps)\sum_{p' \in T(X)} \euclsq(p', \{\alpha(c_1), \dots, \alpha(c_k) \}) \pm \eta^2 \eps\\
&\in  (1 \pm \eps)\sum_{p' \in T(X)} \euclsq(\pi(p'), \{c_1, \dots, c_k \})  \pm (1 \pm \eps) \eta^2 \eps \pm \eta^2 \eps \\ 
&\in (1 \pm \Oh(\eps))\sum_{p \in X}\weight(p) \cdot \euclsq(p, \{c_1, \dots, c_k \}).\qedhere
 \end{align*}
This concludes the proof of the lemma. 
\end{proof}

However, computing a coreset of $T(X)$ by using Algorithm~\ref{vh-alg} would be too costly in both time and space, as this coreset is a subset of a $(\ell+W)$-dimensional space. Instead, in \cref{vh-alg-mod} we show an efficient implementation of Algorithm~\ref{vh-alg} for input $T(X)$.

\begin{algorithm}
	\caption{A modification of the first round of the Huang--Vishnoi algorithm~\cite{HuangV20}}
	\label{vh-alg-mod}
	\KwIn{A weighted set $X$ of $n$ points in $\R^\ell$, $\eps, \delta \in (0,1/2)$, and an integer $k \geq 1$.}
	\KwOut{A point set $S\subseteq \R^\ell$ together with a weight function $\weight:S\rightarrow \R_{\geq 0}$.}
	$N_1 \leftarrow \Oh\left(\eps^{-25}  k^5 \log \frac{k}{\delta}  \right)$\;
	Let $C^\star\subseteq \R^\ell$ (a set of $k$ centers) be a $\gamma$-approximation solution to the $k$-means clustering problem over $(X,\weight)$ ($\gamma = \Oh(1)$)\;\label{it:constantapprox}
	For $x\in X$, let $c^\star(x)$ be the closest point to $x$ in $C^\star$ (ties are broken arbitrarily). For $c\in C^\star$, let $X_c$ be the set of points $x\in X$ with $c^\star(x)=c$\;
	For $x\in X$, let $\sigma_1^T(x) = 64 \gamma^2 \left(\frac{\euclsq(x,c^\star(x)) + \frac {\eps \eta^2} {6W}}{ \nu(X,C^\star) +  \frac {\eps \eta^2} {6}}+\frac{1}{\weight(X_{c^\star(x)})}\right)$\;%, and for each $j \in \{1, \dots, w(x)\},$ set $\sigma_1^T(x^{(j)}) = \sigma_1(x)$\;
	Pick a non-uniform random sample $D_1'$ of $N_1$ points from $T(X)$, where each $x\in T(X)$ is selected with probability $\frac{\sigma_1^T(\pi(x))}{\sum_{y\in X} \weight(y)\sigma_1^T(\pi(y))}$. For $x\in D_1'$, let $u(x)\leftarrow \frac{\sum_{y\in X} \weight(y)\sigma_1(y)}{\absolute{D_1'}\cdot \sigma_1(x)}$\;
	%
	%For each $x\in X$, let $u(x)\leftarrow \absolute{\{j \in [w(x)]:x^{(j)} \in D_1' \}} \frac{\sum_{y\in X}\sigma_1(y)}{\absolute{D_1'}\cdot \sigma_1(x)}$, and let $D_1 = \{x \in X: u(x) \neq 0\}$\;
	%
	%For each $c\in C^\star$, let $u(c)\leftarrow (1+10\eps)\absolute{X_c}-\sum_{x\in D_1\cap X_c} u(x)$\;
	%
	
	Output $(\pi(D_1'),u)$.
\end{algorithm}
 
 \begin{lemma}\label{lem:similar-outputs}
 	Let $(X,\weight)$ be a weighted set of $\R^\ell$, let $U$ be the output of Algorithm~\ref{vh-alg} with input $T(X)$, and let $V$ be the output of Algorithm~\ref{vh-alg-mod} with input $(X,\weight)$ using the same randomness. We have $V = \pi(U)$.
 \end{lemma}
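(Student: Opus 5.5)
The plan is to run Algorithm~\ref{vh-alg} on $T(X)$ and Algorithm~\ref{vh-alg-mod} on $(X,\weight)$ side by side and check, line by line, that every quantity Algorithm~\ref{vh-alg} computes on $T(X)$ is a function of the $\pi$-image of its argument and coincides with what Algorithm~\ref{vh-alg-mod} computes. Once the two algorithms induce the same sampling distribution over $T(X)$ and the same weights, coupling them through the same randomness forces the same multiset of sampled points $D_1'$. Projecting that multiset under $\pi$ then gives $V=\pi(U)$.

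\textbf{Step 1: the constant-factor solution.} For any centers $c_1,\dots,c_k\in\R^\ell$ we have
\[
\euclsq\bigl(x^{(j)},\alpha(c)\bigr)=\euclsq(x,c)+\frac{\eps\eta^2}{6W}.
\]
Hence the cost of $\alpha(C)$ on $T(X)$ equals $\nu(X,C)+\eps\eta^2/6$, and the assignment of each copy $x^{(j)}$ to its closest center is the assignment of $x$. I would fix the convention, in both algorithms, that the $\gamma$-approximation for $T(X)$ in line~\ref{line:constantfactor} is $\alpha(C^\star)$, where $C^\star$ is the one computed in line~\ref{it:constantapprox} of Algorithm~\ref{vh-alg-mod}. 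Under "the same randomness" this choice is legitimate: centers may be restricted to the first $\ell$ coordinates at a loss of at most an additive $\eps\eta^2/6$, which Lemma~\ref{clm:projection-coreset} shows is at most an $\Oh(\eps)$ fraction of the optimal cost, so $\alpha(C^\star)$ is still an $\Oh(\gamma)$-approximation for $T(X)$. This is the step I expect to be the main obstacle. The statement ``using the same randomness'' only makes sense if the constant-factor subroutine is shared, so I would state this coupling explicitly.

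\textbf{Step 2: sensitivities and distribution.} With $C^\star$ fixed as above, the cluster of $\alpha(c)$ in $T(X)$ consists of all copies of the points of $X_c$, so its cardinality is $\weight(X_c)$. Substituting into $\sigma_1$ gives $\sigma_1(x^{(j)})=\sigma_1^T(x)$ for every copy $j$. Consequently $\sum_{y'\in T(X)}\sigma_1(y')=\sum_{y\in X}\weight(y)\sigma_1^T(y)$, and the selection probabilities in the two algorithms agree pointwise on $T(X)$. The same holds for the weights $u(x')$, reading $\sigma_1$ in the last line of Algorithm~\ref{vh-alg-mod} as $\sigma_1^T\circ\pi$.

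\textbf{Step 3: conclusion.} Both algorithms draw $N_1$ samples from the identical distribution on $T(X)$ and assign identical weights, so the same random bits yield the same $D_1'$ and the same $u$. Therefore $V=(\pi(D_1'),u)=\pi(U)$, where $\pi$ is interpreted with multiplicities as defined above.
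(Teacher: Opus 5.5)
Your proposal is correct and follows essentially the same route as the paper. The common steps are: use $\euclsq(x^{(j)},\alpha(c))=\euclsq(x,c)+\eps\eta^2/(6W)$ to show that $\alpha(C^\star)$ is a constant-factor solution for $T(X)$ (the paper gets $3\gamma$ via $\eta^2\le 2\nu(X,C^\star)$ and $\OPT(T(X))\ge\OPT(X)$, which is what your additive-loss remark amounts to), then check that cluster sizes and sensitivities coincide so the shared randomness yields the same sample. Your explicit coupling of the constant-factor subroutine and your reading of $\sigma_1$ as $\sigma_1^T\circ\pi$ in Algorithm~\ref{vh-alg-mod} are the conventions the paper relies on implicitly; the only caveat is that if Algorithm~\ref{vh-alg} uses the inflated factor $3\gamma$, the sensitivities agree only up to a common constant, which cancels in both the sampling probabilities and the weights.
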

 \begin{proof}
 	To show the lemma, we prove that throughout each step of Algorithm~\ref{vh-alg-mod}, we simulate the behavior of Algorithm~\ref{vh-alg} using the two following observations.
 	
 	First, let $C^\star$ be a $\gamma$-approximation solution of the $k$-means clustering problem over~$(X,\weight)$. We have $\nu(T(X), \alpha(C^\star)) = \nu(X, C^\star) + \frac{\eta^2 \eps}{6}$. Furthermore, since $\absolute{X} \geq k+1$ by the condition of the theorem, there are $x,y \in X$ and $c \in C^\star$ such that $x$ and $y$ both get served by a single center $c$. Hence, 
\[\eta^2 \leq \euclsq(x,y) \le (\eucl(x,c)+\eucl(c,y))^2 \leq 2 (\euclsq(x, c) + \euclsq(c, y)) \leq 2 \nu(X, C^\star),\] 
which leads to $\nu(T(X), \alpha(C^\star))\leq 3 \nu(X, C^\star)$. Finally, since the cost of an optimal clustering of $T(X)$ is at least the cost of an optimal clustering of $X$, we know that $\alpha(C^\star)$ is a $3 \gamma$-approximation solution for $k$-means clustering on $T(X)$. As a result, it suffices to compute a constant-factor approximate solution for $k$-means clustering on $X$ to obtain a constant-factor approximation solution for $k$-means on~$T(X)$.
 	
 	Secondly, let $x \in X$, $i \leq \weight(x)$, and $C^\star \subseteq X$ be a subset of $X$ of size $k$. Define $c^\star(x)$ to be the closest point to $x$ in $C^\star$, $c_\alpha^\star(x^{(i)})$ the closest point to $x^{(i)}$ in $\alpha(C^\star)$, $X_{c}$ a set of points $x \in X$ such that $c^\star(x) = c$, and $T(X)_{c}$ a set of points $x \in T(X)$ such that $c_\alpha^\star(x) = c$. We have $c_\alpha^\star(x^{(i)}) =\alpha(c^\star(x))$, and thus $\weight(X_{c^\star(x)}) =  \absolute{T(X)_{c_\alpha^\star(x^{(i)})}}$; and
\[\frac{\euclsq(x^{(i)},c_\alpha^\star(x^{(i)}))}{ \nu(T(X),\alpha(C^\star))} = \frac{\euclsq(x,c^\star(x)) +  \frac {\eps \eta^2} {6W}}{\nu(X,C^\star) +  \frac {\eps \eta^2}{6}}\]
which shows that \cref{vh-alg-mod} indeed computes $\sigma_1$ for each point in $T(X)$ by computing $\sigma_1^T$ for each point in $X$. Finally, since we assume that we use the same randomness for the two algorithms, the sampling process samples the same points, which finishes the proof. \qedhere
\end{proof}

The correctness of \cref{thm:small-coreset-offline} follows from \cref{thm:reduction}, \cref{clm:projection-coreset} and \cref{lem:similar-outputs}. We now show how to implement the algorithm efficiently using small space. First, note that the points of $T(X)$ can be encoded efficiently in $\Oh(\ell \log n)$ space by storing the first $\ell$ coordinates of each point, and the index of the only non-zero coordinate. We use the algorithm of Mettu and Plaxton~\cite{MettuPlaxtonApprox} to implement Line~\ref{it:constantapprox} of \cref{vh-alg-mod}:
 
\begin{fact}[{\cite{MettuPlaxtonApprox}}]\label{fact:constantfactor}
Let $X$ be a weighted set of $n$ points in $(\R^\ell, \eucl)$. There exists an $\Oh(1)$-approximation algorithm that solves the $k$-means clustering problem on $X$ in $\Oh(n  \log W \cdot \max(k,\log n))$ time and space, where $W$ is the total weight of the points. 
\end{fact}

Finally, we show how to implement the sampling process efficiently.

\begin{lemma}\label{lm:sampling}
Let $\mathcal{D}$ be a distribution over $T(X)$ such that, for every $i \in \{1, \dots, n\}$, the samples $x_i^{(1)}, \dots, x_i^{(\weight(x_i))}$ share the same sampling rate $p_i$. Then, there exists an algorithm that can draw $N_1$ samples from $T(X)$ according to $\mathcal{D}$ in $\Oh(n + N_1 \log n)$ time and space.
\end{lemma}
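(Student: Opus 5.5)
The plan is a two-stage sampling procedure: first sample the group index $i$, then sample a copy inside that group uniformly. The key point is that we never materialize $T(X)$. Since all copies $x_i^{(1)}, \dots, x_i^{(\weight(x_i))}$ share the same rate $p_i$, the total probability mass of group $i$ is $P_i := \weight(x_i)\, p_i$, and $\sum_{i=1}^n P_i = 1$. Drawing a sample from $\mathcal{D}$ is therefore equivalent to two steps. First draw an index $i \in \{1,\dots,n\}$ with probability $P_i$. Then draw $j \in \{1,\dots,\weight(x_i)\}$ uniformly at random and output $x_i^{(j)}$. The point $x_i^{(j)}$ is encoded, as noted above, by a pointer to $x_i$ together with the index $\ell + \sum_{h<i}\weight(x_h) + j$ of its unique nonzero extra coordinate.

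Correctness is immediate: the probability of outputting $x_i^{(j)}$ is $P_i \cdot \frac{1}{\weight(x_i)} = p_i$, as required. Independence across the $N_1$ draws holds because fresh randomness is used each time.

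For efficiency, we do $\Oh(n)$ preprocessing. We compute $P_1,\dots,P_n$ and the prefix sums $\Pi_i = \sum_{h\le i} P_h$. In the same pass we compute the prefix sums of the weights $\sum_{h< i}\weight(x_h)$, which are needed for the encoding above. These are integers bounded by $W = n^{\Oh(1)}$ in our application, so each fits in $\Oh(1)$ words. Each of the $N_1$ samples then draws a uniform $r\in[0,1)$ and binary searches for the smallest $i$ with $\Pi_i > r$, taking $\Oh(\log n)$ time. Drawing $j$ costs $\Oh(1)$ time given a uniform integer generator. The total is $\Oh(n + N_1\log n)$ time, and the space is $\Oh(n)$ for the prefix arrays plus $\Oh(N_1)$ for the output. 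Alternatively, Walker's alias method gives $\Oh(n + N_1)$, which is within the claimed bound.

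The only delicate point is numerical: in the word RAM model we cannot sample a real $r$ exactly. Here the probabilities $p_i$ arise from $\sigma_1^T$ and are rationals with polynomially bounded numerators and denominators, or can be rounded to $\Theta(\log n)$-bit precision. I would handle this by scaling the $P_i$ to integers over a common denominator of $n^{\Oh(1)}$ size, or by accepting a $1/\poly(n)$ perturbation absorbed into the failure probability. The integer $r$ is then drawn from $\{0,\dots,\Pi_n - 1\}$ and the binary search proceeds unchanged. I expect this precision bookkeeping, rather than the sampling logic itself, to be the main point to check.
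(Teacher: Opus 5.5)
Your proposal is correct and matches the paper's inverse-transform argument. The paper also builds the prefix sums $r_j = \sum_{i\le j} w(x_i)\,p_i$, finds the block $h$ with $r \in [r_{h-1}, r_h)$ for a uniform real $r$, and then picks a copy within that block; the only difference is that it takes the copy index from the same $r$ as $f = \lfloor (r - r_{h-1})/p_h \rfloor + 1$, which is uniform on $\{1,\dots,w(x_h)\}$ given $h$, instead of drawing fresh randomness.

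The paper does not address the finite-precision issue you raise. Of your two fixes, rounding is the safe one, because a common denominator of $n$ polynomially bounded rationals need not itself be polynomially bounded.
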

\begin{proof}
We apply a so-called inverse transform procedure, see for example \cite[Chapter 2]{DBLP:books/sp/Devroye86}. Define $r_0 = 0$ and $r_j = \sum_{i=1}^j \weight(x_i) \cdot p_i$. Next, draw a random real number $r$ in $[0,1]$ and let $h$ be the integer such that $r \in [r_{h-1}, r_{h})$. Finally, set $f := \lfloor \frac{r-r_{h-1}}{p_h}\rfloor + 1$. The algorithm outputs a point $x_h^{(f)}$, and it is straightforward to verify that the point is indeed sampled with probability $p_h$.
\end{proof}
 
This concludes the proof of \cref{thm:small-coreset-offline}.

\begin{credits}
\subsubsection{\discintname}
The authors have no competing interests to declare that are relevant to the content of this article.
\end{credits}
\bibliographystyle{splncs04}
\bibliography{main}

@INPROCEEDINGS{10353181,
  author={Cohen-Addad, Vincent and Woodruff, David P. and Zhou, Samson},
  booktitle={FOCS}, 
  title={Streaming {E}uclidean k-median and k-means with o(log n) Space}, 
  year={2023},
  volume={},
  number={},
  pages={883-908},
  doi={10.1109/FOCS57990.2023.00057}}

@article{10.1093/bib/bby090,
    author = {Zou, Quan and Lin, Gang and Jiang, Xingpeng and Liu, Xiangrong and Zeng, Xiangxiang},
    title = {Sequence clustering in bioinformatics: an empirical study},
    journal = {Briefings in Bioinformatics},
    volume = {21},
    number = {1},
    pages = {1-10},
    year = {2020},
    month = {01},
    issn = {1477-4054},
    doi = {10.1093/bib/bby090}
}

@INPROCEEDINGS{9317938,
  author={Charalampopoulos, Panagiotis and Kociumaka, Tomasz and Wellnitz, Philip},
  booktitle={FOCS}, 
  title={Faster Approximate Pattern Matching: A Unified Approach}, 
  year={2020},
  volume={},
  number={},
  pages={978-989},
  doi={10.1109/FOCS46700.2020.00095}}

@article{10.1007/s10994-024-06540-z,
author = {Jiang, Shaofeng H. -C. and Krauthgamer, Robert and Lou, Jianing and Zhang, Yubo},
title = {Coresets for kernel clustering},
year = {2024},
issue_date = {Aug 2024},
address = {USA},
volume = {113},
number = {8},
issn = {0885-6125},
doi = {10.1007/s10994-024-06540-z},
journal = {Mach. Learn.},
month = {apr},
pages = {5891-5906},
numpages = {16}
}

@inproceedings{DBLP:conf/icalp/Bhattacharya023,
  author       = {Sudatta Bhattacharya and
                  Michal Kouck{\'{y}}},
  title        = {Streaming k-Edit Approximate Pattern Matching via String Decomposition},
  booktitle    = {{ICALP} 2023},
  pages        = {22:1--22:14},
  year         = {2023},
  doi          = {10.4230/LIPICS.ICALP.2023.22},
  bibsource    = {dblp computer science bibliography, https://dblp.org}
}

@article{10.1093/bib/bbs035,
    author = {Li, Weizhong and Fu, Limin and Niu, Beifang and Wu, Sitao and Wooley, John},
    title = {Ultrafast clustering algorithms for metagenomic sequence analysis},
    journal = {Briefings in Bioinformatics},
    volume = {13},
    number = {6},
    pages = {656-668},
    year = {2012},
    month = {11},
    issn = {1467-5463},
    doi = {10.1093/bib/bbs035}
}

@inproceedings{10.1145/3313276.3316350,
author = {Makarychev, Konstantin and Makarychev, Yury and Razenshteyn, Ilya},
title = {Performance of {J}ohnson-{L}indenstrauss transform for k-means and k-medians clustering},
year = {2019},
isbn = {9781450367059},
doi = {10.1145/3313276.3316350},
booktitle = {STOC},
pages = {1027-1038},
numpages = {12}
}

@article{10.1093/comjnl/bxm048,
    author = {Marx, D\'{a}niel},
    title = {Parameterized Complexity and Approximation Algorithms},
    journal = {The Computer Journal},
    volume = {51},
    number = {1},
    pages = {60-78},
    year = {2007},
    month = {07},
    abstract = {Approximation algorithms and parameterized complexity are usually considered to be two separate ways of dealing with hard algorithmic problems. In this paper, our aim is to investigate how these two fields can be combined to achieve better algorithms than what any of the two theories could offer. We discuss the different ways parameterized complexity can be extended to approximation algorithms, survey results of this type and propose directions for future research.},
    issn = {0010-4620},
    doi = {10.1093/comjnl/bxm048},
    -eprint = {https://academic.oup.com/comjnl/article-pdf/51/1/60/1357220/bxm048.pdf},
}

@InProceedings{10.1007/978-3-642-03784-9_23,
author="Amir, Amihood
and Landau, Gad M.
and Na, Joong Chae
and Park, Heejin
and Park, Kunsoo
and Sim, Jeong Seop",
title="Consensus Optimizing Both Distance Sum and Radius",
booktitle="SPIRE",
year="2009",
-publisher="Springer Berlin Heidelberg",
-address="Berlin, Heidelberg",
pages="234--242",
abstract="The consensus string problem is finding a representative string (consensus) of a given set {\$}{\backslash}mathbb{\{}S{\}}{\$}of strings. In this paper we deal with the consensus string problems optimizing both distance sum and radius, where the distance sum is the sum of (Hamming) distances from the strings in {\$}{\backslash}mathbb{\{}S{\}}{\$}to the consensus and the radius is the longest (Hamming) distance from the strings in {\$}{\backslash}mathbb{\{}S{\}}{\$}to the consensus. Although there have been results considering either distance sum or radius, there have been no results considering both as far as we know.",
isbn="978-3-642-03784-9"
}

@article{AMIR2013371,
title = {On the hardness of the Consensus String problem},
journal = {Information Processing Letters},
volume = {113},
number = {10},
pages = {371-374},
year = {2013},
issn = {0020-0190},
doi = {10.1016/j.ipl.2013.02.016},
author = {Amihood Amir and Haim Paryenty and Liam Roditty}
}

@InProceedings{10.1007/978-3-032-05228-5_12,
author="Gabory, Est{\'e}ban
and Bulteau, Laurent
and Fici, Gabriele
and Verbeek, Hilde",
title="String Consensus Problems with Swaps and Substitutions",
doi          = {10.1007/978-3-032-05228-5\_12},
booktitle="SPIRE",
year="2025",
pages="133--147",
isbn="978-3-032-05228-5"
}

@InProceedings{10.1007/978-3-642-13509-5_28,
author="Lee, Taehyung
and Na, Joong Chae
and Park, Heejin
and Park, Kunsoo
and Sim, Jeong Seop",
title="Finding Optimal Alignment and Consensus of Circular Strings",
doi          = {10.1007/978-3-642-13509-5\_28},
booktitle="CPM",
year="2010",
pages="310--322",
isbn="978-3-642-13509-5"
}

@article{DBLP:journals/algorithmica/AmirPR16,
  author       = {Amihood Amir and
                  Haim Paryenty and
                  Liam Roditty},
  title        = {Configurations and Minority in the String Consensus Problem},
  journal      = {Algorithmica},
  volume       = {74},
  number       = {4},
  pages        = {1267--1292},
  year         = {2016},
  doi          = {10.1007/S00453-015-9996-7},
  bibsource    = {dblp computer science bibliography, https://dblp.org}
}

@InProceedings{10.1007/978-3-319-07566-2_1,
author="Amir, Amihood
and Ficler, Jessica
and Roditty, Liam
and Shalom, Oren Sar",
title="On the Efficiency of the {H}amming {C}-Centerstring Problems",
 doi          = {10.1007/978-3-319-07566-2\_1},
booktitle="CPM",
year="2014",
pages="1--10",
isbn="978-3-319-07566-2"
}

@article{GASIENIEC2004289,
title = {Approximation algorithms for {H}amming clustering problems},
journal = {Journal of Discrete Algorithms},
volume = {2},
number = {2},
pages = {289-301},
year = {2004},
issn = {1570-8667},
doi = {10.1016/S1570-8667(03)00079-0},
author = {Leszek G\k{a}sieniec and Jesper Jansson and Andrzej Lingas}
}

@article{bulteau_et_al:LIPIcs.MFCS.2018.1,
  author       = {Laurent Bulteau and
                  Markus L. Schmid},
  title        = {Consensus Strings with Small Maximum Distance and Small Distance Sum},
  journal      = {Algorithmica},
  volume       = {82},
  number       = {5},
  pages        = {1378--1409},
  year         = {2020},
  doi          = {10.1007/S00453-019-00647-9},
  bibsource    = {dblp computer science bibliography, https://dblp.org}
}

@article{DBLP:journals/eatcs/BulteauHKN14,
  author       = {Laurent Bulteau and
                  Falk H{\"{u}}ffner and
                  Christian Komusiewicz and
                  Rolf Niedermeier},
  title        = {Multivariate Algorithmics for {NP}-Hard String Problems},
  journal      = {Bull. {EATCS}},
  volume       = {114},
  year         = {2014},
  -url          = {http://eatcs.org/beatcs/index.php/beatcs/article/view/310},
  timestamp    = {Fri, 12 Feb 2021 13:39:57 +0100},
  biburl       = {https://dblp.org/rec/journals/eatcs/BulteauHKN14.bib},
  bibsource    = {dblp computer science bibliography, https://dblp.org}
}

@article{DELAHIGUERA200039,
title = {Topology of strings: Median string is {NP}-complete},
journal = {Theoretical Computer Science},
volume = {230},
number = {1},
pages = {39-48},
year = {2000},
issn = {0304-3975},
doi = {https://doi.org/10.1016/S0304-3975(97)00240-5},
author = {Colin de la Higuera and Francisco Casacuberta}
}

@InProceedings{10.1007/3-540-44888-8_23,
author="Nicolas, Fran{\c{c}}ois
and Rivals, Eric",
title="Complexities of the Centre and Median String Problems",
booktitle="CPM",
year="2003",
-publisher="Springer Berlin Heidelberg",
-address="Berlin, Heidelberg",
pages="315--327",
abstract="Given a finite set of strings, the median string problem consists in finding a string that minimizes the sum of the distances to the strings in the set. Approximations of the median string are used in a very broad range of applications where one needs a representative string that summarizes common information to the strings of the set. It is the case in Classification, in Speech and Pattern Recognition, and in Computational Biology. In the latter, Median String is related to the key problem of Multiple Alignment. In the recent literature, one finds a theorem stating the NP-completeness of the median string for unbounded alphabets. However, in the above mentioned areas, the alphabet is often finite. Thus, it remains a crucial question whether the median string problem is NP-complete for finite and even binary alphabets. In this work, we provide an answer to this question and also give the complexity of the related centre string problem. Moreover, we study the parametrized complexity of both problems with respect to the number of input strings.",
isbn="978-3-540-44888-4"
}

@article{KOHONEN1985309,
title = {Median strings},
journal = {Pattern Recognition Letters},
volume = {3},
number = {5},
pages = {309-313},
year = {1985},
issn = {0167-8655},
doi = {10.1016/0167-8655(85)90061-3},
author = {Teuvo Kohonen}
}

@inproceedings{DBLP:conf/innovations/Chakraborty0K23,
  author       = {Diptarka Chakraborty and
                  Debarati Das and
                  Robert Krauthgamer},
  title        = {Clustering Permutations: New Techniques with Streaming Applications},
  booktitle    = {ITCS},
  series       = {LIPIcs},
  volume       = {251},
  pages        = {31:1--31:24},
  -publisher    = {Schloss Dagstuhl - Leibniz-Zentrum f{\"{u}}r Informatik},
  year         = {2023},
  -url          = {https://doi.org/10.4230/LIPIcs.ITCS.2023.31},
  doi          = {10.4230/LIPICS.ITCS.2023.31},
  timestamp    = {Wed, 21 Aug 2024 22:46:00 +0200},
  biburl       = {https://dblp.org/rec/conf/innovations/Chakraborty0K23.bib},
  bibsource    = {dblp computer science bibliography, https://dblp.org}
}

@inproceedings{doi:10.1137/1.9781611976465.48,
  author       = {Diptarka Chakraborty and
                  Debarati Das and
                  Robert Krauthgamer},
  -editor       = {D{\'{a}}niel Marx},
  title        = {Approximating the Median under the {U}lam Metric},
  booktitle    = {SODA},
  pages        = {761--775},
  -publisher    = {{SIAM}},
  year         = {2021},
  -url          = {https://doi.org/10.1137/1.9781611976465.48},
  doi          = {10.1137/1.9781611976465.48},
  timestamp    = {Thu, 15 Jul 2021 13:49:01 +0200},
  biburl       = {https://dblp.org/rec/conf/soda/ChakrabortyDK21.bib},
  bibsource    = {dblp computer science bibliography, https://dblp.org}
}

@inproceedings{10.1145/3717823.3718299,
author = {Cohen-Addad, Vincent and Grandoni, Fabrizio and Lee, Euiwoong and Schwiegelshohn, Chris and Svensson, Ola},
title = {A $(2+\varepsilon)$-Approximation Algorithm for Metric $k$-Median},
year = {2025},
isbn = {9798400715105},
doi = {10.1145/3717823.3718299},
booktitle = {STOC},
pages = {615-624},
numpages = {10}
}

@inproceedings{10.1145/3313276.3316307,
author = {Narayanan, Shyam and Nelson, Jelani},
title = {Optimal terminal dimensionality reduction in {E}uclidean space},
year = {2019},
isbn = {9781450367059},
doi = {10.1145/3313276.3316307},
booktitle = {STOC},
pages = {1064-1069},
numpages = {6}
}

@article{Achlioptas03,
  author       = {Dimitris Achlioptas},
  title        = {Database-friendly random projections: Johnson-{L}indenstrauss with binary
                  coins},
  doi          = {10.1016/S0022-0000(03)00025-4},
  journal      = {J. Comput. Syst. Sci.},
  volume       = {66},
  number       = {4},
  pages        = {671--687},
  year         = {2003}
}

@inproceedings{10.1145/3188745.3188828,
	author = {Mahabadi, Sepideh and Makarychev, Konstantin and Makarychev, Yury and Razenshteyn, Ilya},
	title = {Nonlinear dimension reduction via outer Bi-{L}ipschitz extensions},
	year = {2018},
	isbn = {9781450355599},
	doi = {10.1145/3188745.3188828},
	abstract = {We introduce and study the notion of *an outer bi-Lipschitz extension* of a map between Euclidean spaces. The notion is a natural analogue of the notion of *a Lipschitz extension* of a Lipschitz map. We show that for every map f there exists an outer bi-Lipschitz extension f′ whose distortion is greater than that of f by at most a constant factor. This result can be seen as a counterpart of the classic Kirszbraun theorem for outer bi-Lipschitz extensions. We also study outer bi-Lipschitz extensions of near-isometric maps and show upper and lower bounds for them. Then, we present applications of our results to prioritized and terminal dimension reduction problems, described next. We prove a *prioritized* variant of the Johnson–Lindenstrauss lemma: given a set of points X⊂ ℝd of size N and a permutation (”priority ranking”) of X, there exists an embedding f of X into ℝO(logN) with distortion O(loglogN) such that the point of rank j has only O(log3 + ε j) non-zero coordinates – more specifically, all but the first O(log3+ε j) coordinates are equal to 0; the distortion of f restricted to the first j points (according to the ranking) is at most O(loglogj). The result makes a progress towards answering an open question by Elkin, Filtser, and Neiman about prioritized dimension reductions. We prove that given a set X of N points in ℜd, there exists a *terminal* dimension reduction embedding of ℝd into ℝd′, where d′ = O(logN/ε4), which preserves distances ||x−y|| between points x∈ X and y ∈ ℝd, up to a multiplicative factor of 1 ± ε. This improves a recent result by Elkin, Filtser, and Neiman. The dimension reductions that we obtain are nonlinear, and this nonlinearity is necessary.},
	booktitle = {STOC},
	pages = {1088-1101},
	numpages = {14},
	keywords = {Prioritized Johnson-Lindenstrauss, Near-Isometric Maps, Metric Embedding, Dimension Reduction, Bi-Lipschitz Extension},
	location = {Los Angeles, CA, USA},
	-series = {STOC 2018}
}

@inproceedings{DBLP:conf/fsttcs/FominGS19,
	author       = {Fedor V. Fomin and
	Petr A. Golovach and
	Kirill Simonov},
	title        = {Parameterized k-Clustering: Tractability Island},
	booktitle    = {IARCS},
	series       = {LIPIcs},
	volume       = {150},
	pages        = {14:1--14:15},
	year         = {2019},
	doi          = {10.4230/LIPICS.FSTTCS.2019.14},
	bibsource    = {dblp computer science bibliography, https://dblp.org}
}

@article{DBLP:journals/corr/abs-2502-07653,
  author       = {Ragesh Jaiswal and
                  Amit Kumar and
                  Jatin Yadav},
  title        = {Robust-Sorting and Applications to {U}lam-Median},
  journal      = {CoRR},
  volume       = {abs/2502.07653},
  year         = {2025},
  doi          = {10.48550/ARXIV.2502.07653},
  eprinttype   = {arXiv},
  eprint       = {2502.07653},
  bibsource    = {dblp computer science bibliography, https://dblp.org}
}

@inproceedings{cohenaddad:hal-02360762,
	TITLE = {Inapproximability of Clustering in {Lp}-metrics},
	AUTHOR = {Cohen-Addad, Vincent and Srikanta, Karthik},
	BOOKTITLE = {FOCS},
    pages        = {519--539},
	YEAR = {2019},
	-MONTH = Nov,
	KEYWORDS = {graph embedding ; hardness of ap- proximation ; k-median ; k-means ; clustering},
	PDF = {https://hal.science/hal-02360762v2/file/main.pdf},
    doi          = {10.1109/FOCS.2019.00040},
	HAL_ID = {hal-02360762},
	HAL_VERSION = {v2},
}

@inproceedings{HuangV20,
	author       = {Lingxiao Huang and
	Nisheeth K. Vishnoi},
	title        = {Coresets for clustering in {E}uclidean spaces: importance sampling is
	nearly optimal},
	booktitle    = {STOC},
	pages        = {1416--1429},
	-publisher    = {{ACM}},
	year         = {2020},
	doi          = {10.1145/3357713.3384296},
	timestamp    = {Mon, 18 Dec 2023 07:33:37 +0100},
	biburl       = {https://dblp.org/rec/conf/stoc/HuangV20.bib},
	bibsource    = {dblp computer science bibliography, https://dblp.org}
}

@article{MettuPlaxtonApprox,
	author = {Mettu, Ramgopal R. and Plaxton, C. Greg},
	title = {Optimal Time Bounds for Approximate Clustering},
	year = {2004},
	issue_date = {June 2004},
	publisher = {Kluwer Academic Publishers},
	address = {USA},
	volume = {56},
	number = {1–3},
	issn = {0885-6125},
	doi = {10.1023/B:MACH.0000033114.18632.e0},
	journal = {Mach. Learn.},
	month = {jun},
	pages = {35-60},
	numpages = {26}
}

@article{DBLP:journals/siamcomp/Chen09,
	author       = {Ke Chen},
	title        = {On Coresets for k-Median and k-Means Clustering in Metric and {E}uclidean
	Spaces and Their Applications},
	journal      = {{SIAM} J. Comput.},
	volume       = {39},
	number       = {3},
	pages        = {923--947},
	year         = {2009},
	doi          = {10.1137/070699007},
	bibsource    = {dblp computer science bibliography, https://dblp.org}
}

@book{DBLP:books/sp/Devroye86,
  author       = {Luc Devroye},
  title        = {Non-Uniform Random Variate Generation},
  publisher    = {Springer},
  year         = {1986},
  doi          = {10.1007/978-1-4613-8643-8},
  isbn         = {978-1-4613-8645-2},
  bibsource    = {dblp computer science bibliography, https://dblp.org}
}

@inproceedings{10.1145/3166072.3166076,
	author = {Chappell, Timothy and Geva, Shlomo and Hogan, James},
	title = {K-Means Clustering of Biological Sequences},
	year = {2017},
	isbn = {9781450363914},
	doi = {10.1145/3166072.3166076},
	booktitle = {ADCS},
	articleno = {2},
    pages = {1--4}
}

@InProceedings{braverman_et_al:LIPIcs.APPROX-RANDOM.2019.62,
	author =	{Braverman, Vladimir and Feldman, Dan and Lang, Harry and Rus, Daniela},
	title =	{Streaming Coreset Constructions for $M$-Estimators},
	booktitle =	{APPROX/RANDOM},
	pages =	{62:1--62:15},
	ISBN =	{978-3-95977-125-2},
	ISSN =	{1868-8969},
	year =	{2019},
	volume =	{145},
	URN =		{urn:nbn:de:0030-drops-112778},
	doi =		{10.4230/LIPIcs.APPROX-RANDOM.2019.62}
}

@inproceedings{DBLP:conf/focs/OstrovskyR00,
	author       = {Rafail Ostrovsky and
	Yuval Rabani},
	title        = {Polynomial Time Approximation Schemes for Geometric k-Clustering},
	booktitle    = {FOCS},
	pages        = {349--358},
	-publisher    = {{IEEE} Computer Society},
	year         = {2000},
	doi          = {10.1109/SFCS.2000.892123},
	timestamp    = {Tue, 08 Jul 2025 16:42:26 +0200},
	biburl       = {https://dblp.org/rec/conf/focs/OstrovskyR00.bib},
	bibsource    = {dblp computer science bibliography, https://dblp.org}
}

@article{KARLOFF,
	title = {Fast algorithms for approximately counting mismatches},
	journal = {Information Processing Letters},
	volume = {48},
	number = {2},
	pages = {53-60},
	year = {1993},
	issn = {0020-0190},
	doi = {10.1016/0020-0190(93)90177-B},
	author = {Howard Karloff}
}

@InProceedings{kopelowitz_et_al:OASIcs.SOSA.2018.10,
	author =	{Kopelowitz, Tsvi and Porat, Ely},
	title =	{A Simple Algorithm for Approximating the Text-To-Pattern Hamming Distance},
	booktitle =	{SOSA},
	pages =	{10:1--10:5},
	series =	{OASIcs},
	ISBN =	{978-3-95977-064-4},
	ISSN =	{2190-6807},
	year =	{2018},
	volume =	{61},
	URN =		{urn:nbn:de:0030-drops-83089},
	doi =		{10.4230/OASIcs.SOSA.2018.10}
}

@inproceedings{DBLP:conf/esa/FischerGH025,
  author       = {Nick Fischer and
                  Elazar Goldenberg and
                  Mursalin Habib and
                  {Karthik {C. S.}}},
  -editor       = {Anne Benoit and
                  Haim Kaplan and
                  Sebastian Wild and
                  Grzegorz Herman},
  title        = {Hardness of Median and Center in the {U}lam Metric},
  booktitle    = {ESA},
  series       = {LIPIcs},
  volume       = {351},
  pages        = {111:1--111:17},
  -publisher    = {Schloss Dagstuhl - Leibniz-Zentrum f{\"{u}}r Informatik},
  year         = {2025},
  -url          = {https://doi.org/10.4230/LIPIcs.ESA.2025.111},
  doi          = {10.4230/LIPICS.ESA.2025.111},
  timestamp    = {Wed, 07 Jan 2026 17:31:38 +0100},
  biburl       = {https://dblp.org/rec/conf/esa/FischerGH025.bib},
  bibsource    = {dblp computer science bibliography, https://dblp.org}
}

@INPROCEEDINGS{10353074,
  author={Abbasi, Fateme and Banerjee, Sandip and Byrka, Jarosław and Chalermsook, Parinya and Gadekar, Ameet and Khodamoradi, Kamyar and Marx, Dániel and Sharma, Roohani and Spoerhase, Joachim},
  booktitle={FOCS}, 
  title={Parameterized Approximation Schemes for Clustering with General Norm Objectives}, 
  year={2023},
  volume={},
  number={},
  pages={1377-1399},
  doi={10.1109/FOCS57990.2023.00085}}

\appendix
\section{Omitted proofs}
\streaming*

\coresetanalysis*
\begin{proof} 
With high probability, the matrix defined in \cref{fact:JL} satisfies \cref{eq:jl} and the algorithm of \cref{fact:means_clustering} is correct. Furthermore, with probability $1-\delta$, the algorithm of \cref{thm:small-coreset-offline} is correct. Below, we condition on these assumptions. 
We start with the following claim: 

\begin{claim}\label{clm:prop-of-mu}
	For all $x \in X, y \in \Sigma^\ell$, we have
    \[(1-\eps)^2 \cdot \hd(x, y) \leq \euclsq(\mu(x), \mu(y)) \leq (1+\eps)^3 \cdot \hd(x, y).\]
\end{claim}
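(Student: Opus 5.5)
The plan is to write $\mu = \tau \circ \kappa$ and chain the two distortion guarantees, one per step. Fix $x \in X$ and $y \in \Sigma^\ell$, and set $s := \frac r2 - \frac{\eps r}6$, so that $\tau(z) = \gamma(z)/\sqrt{s}$. Then
\[
\euclsq(\mu(x),\mu(y)) = \frac{\euclsq(\gamma(\kappa(x)),\gamma(\kappa(y)))}{s}.
\]

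First, handle the alphabet reduction. Since $\kappa(x),\kappa(y) \in \{0,1\}^{r\ell}$, we have $\euclsq(\kappa(x),\kappa(y)) = \hd(\kappa(x),\kappa(y))$. By \cref{eq:alphabet_reduction} from \cref{fact:embedding} (realized via \cref{cor:streaming}), this gives
\[
\hd(x,y) \le \frac{\euclsq(\kappa(x),\kappa(y))}{s} \le (1+\eps)\,\hd(x,y).
\]
This step is deterministic and holds for all pairs of strings, in particular for $y \notin X$.

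Second, handle the dimension reduction. Here $\kappa(x) \in X' = \kappa(X)$ is a terminal, while $\kappa(y)$ is an arbitrary point of $\R^{r\ell}$. We are conditioning on the JL matrix of \cref{fact:JL} satisfying \cref{eq:jl} on $X'$, so \cref{fact:terminal} provides a terminal embedding $\gamma$ with distortion $1+\eps$:
\[
\eucl(\kappa(x),\kappa(y)) \le \eucl(\gamma(\kappa(x)),\gamma(\kappa(y))) \le (1+\eps)\,\eucl(\kappa(x),\kappa(y)).
\]
Squaring gives a multiplicative factor in $[1,(1+\eps)^2]$ on $\euclsq$. Combining with the first step yields
\[
\hd(x,y) \le \euclsq(\mu(x),\mu(y)) \le (1+\eps)^3\,\hd(x,y),
\]
which is at least as strong as the claimed lower bound $(1-\eps)^2\hd(x,y)$.

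The main subtlety is that $\gamma$ must be applied as the same map to non-terminal points. The algorithm only explicitly computes $\gamma$ on the stream points, as $(\jl(x),0)$. For $y \notin X$, $\mu(y)$ is defined through the extension $\gamma$ guaranteed by \cref{fact:terminal}, which need not be computable but exists. I will also check the normalization convention: some versions of terminal embeddings give two-sided $(1\pm\eps)$ distortion rather than $[1,1+\eps]$. That convention is exactly what produces the $(1-\eps)^2$ lower bound stated in the claim. So I will phrase the second step to tolerate $(1-\eps)\eucl \le \eucl(\gamma\cdot,\gamma\cdot) \le (1+\eps)\eucl$, giving $(1-\eps)^2 \le \cdot \le (1+\eps)^3$ after combining.
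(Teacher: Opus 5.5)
Your proposal is correct and follows the paper's proof. You use $\hd(\kappa(x),\kappa(y)) = \euclsq(\kappa(x),\kappa(y))$ together with Karloff's bound, square the terminal-embedding distortion with $\kappa(x)$ as the terminal, and divide by $s = \frac r2 - \frac{\eps r}6$. Like you, the paper implicitly uses the two-sided $(1\pm\eps)$ convention, which is where its factor $(1-\eps)^2$ comes from, and your observation that the one-sided definition would even give $\hd(x,y)$ as the lower bound is accurate.
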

\begin{proof}

	By \cref{fact:embedding}, for all $x,y \in \Sigma^\ell$, we have 
	$\hd(x,y) \le \frac 1 {\frac r 2 - \frac \eps 6 r} \cdot \hd(\kappa(x),\kappa(y)) \le (1+\eps) \cdot \hd(x,y)$.
	By the definition of the Hamming distance, $\hd(\kappa(x),\kappa(y)) = \euclsq(\kappa(x), \kappa(y))$. Hence, by \cref{fact:terminal}, for $x \in X$, we have 
	\begin{align*}
		(1-\eps)^2 \cdot \hd(x,y) & \le \frac 1{\frac r 2 - \frac \eps 6 r} \cdot (1-\eps)^2 \euclsq(\kappa(x) , \kappa(y)) \\
		&\le \euclsq(\mu(x), \mu(y)) \\
        & \le 
        \frac 1{\frac r 2 - \frac \eps 6 r} \cdot  (1+\eps)^2 \cdot \euclsq(\kappa(x), \kappa(y)) \le (1+\eps) (1+\eps)^2 \cdot \hd(x,y).\qedhere
	\end{align*}
\end{proof}

Now, consider a set $C$ of $k$ points in $\Sigma^\ell$, a set $P \subseteq X$, and recall that $\nu(P,C) = \sum_{p \in P} \hd(p,C) \cdot \weight(p)$.
By \cref{clm:prop-of-mu}, for every $p \in P$ and $c \in C$ we have 
\[\hd(p, C) = \min_{c \in C} \hd(p,c) \le \frac{1}{(1-\eps)^2} \min_{c \in C} \euclsq(\mu(p), \mu(c))\]
On the other hand, we have $(1+\eps)^3 \cdot \hd(p,c) \ge \euclsq(\mu(p), \mu(c))$, and therefore 
\[\hd(p,C) = \min_{c \in C} \hd(p,c) \ge \frac{1}{ (1+\eps)^3} \min_{c \in C} \euclsq(\mu(p), \mu(c))\]
It follows that 

\begin{equation}
\begin{aligned}
&\frac{1}{(1+\eps)^3}
\sum_{p \in P} \weight(p)
\cdot \min_{c \in C}
\euclsq(\mu(p), \mu(c)) \le \\
&\nu(P,C) \le \\
&\frac{1}{(1-\eps)^2}
\sum_{p \in P} \weight(p)
\cdot \min_{c \in C}
\euclsq(\mu(p), \mu(c))
\end{aligned}
\label{eq:nu}
\end{equation}
This holds for any set $P \subseteq X$, in particular, for $P = X$:
\begin{align*}
\frac{1}{(1+\eps)^3} \sum_{x \in X} \min_{c \in C} \euclsq(\mu(x), \mu(c)) \le \nu(X,C) \le  \frac{1}{(1-\eps)^2} \sum_{x \in X}   \min_{c \in C} \euclsq(\mu(x), \mu(c))
\end{align*}
Now recall that $S$ is a preimage of $Q$. For $Q$, notice that by combining two coreset constructions sequentially, we obtain that $Q$ is a $((1+\eps)^2 -1)$-coreset of $\mu(X)$. Since $((1+\eps)^2 -1) = 2\eps + \eps^2 < 3\eps$, $Q$ is also a $3\eps$-coreset of $\mu(X)$. By \cref{def:coreset}, we have 
\begin{equation}
\begin{aligned}
	&\Absolute{\sum_{q \in Q} \weight_Q(q) \min_{c \in C} \euclsq(q, \mu(c)) - \sum_{x \in X} \min_{c \in C} \euclsq(x, \mu(c))} \\
	&\text{\qquad \qquad \qquad \qquad \qquad \qquad \qquad}\le 3\eps \sum_{x \in X} \min_{c \in C} \euclsq(\mu(x), \mu(c))
\end{aligned}
\label{eq:Q}
\end{equation}
Consequently,
\begin{align*}
\nu(S,C) & \le \frac{1}{(1-\eps)^2} \sum_{s \in S} \weight_S(s) \cdot \min_{c \in C} \euclsq(\mu(s),\mu(c)) & & \\
& = \frac{1}{(1-\eps)^2} \sum_{q \in Q} \weight_Q(q) \cdot \min_{c \in C} \euclsq(q,\mu(c)) & & \text{($S$ is a preimage of $Q$ under $\mu$)} \\
&  \le \frac{1+3\eps}{(1-\eps)^2}  \sum_{x \in X} \min_{c \in C} \euclsq(\mu(x),\mu(c)) & & \text{(\cref{eq:Q})} \\
& \le \frac{(1 + 3\eps)(1+\eps)^3}{(1-\eps)^2}  \cdot \nu(X,C) & & \text{(\cref{eq:nu})} \\
& \le (1 + \Oh(\eps))  \cdot \nu(X,C) & & \text{holds for $\eps \in (0, 1/2)$}.
\end{align*}
On the other hand,
\begin{align*}
\nu(S,C) &\ge \frac{1}{(1+\eps)^3} \sum_{s \in S} \weight_S(s) \cdot \min_{c \in C} \euclsq(\mu(s),\mu(c)) & &  \\
& = \frac{1}{(1+\eps)^3} \sum_{q \in Q} \weight_Q(q) \cdot \min_{c \in C} \euclsq(q,\mu(c)) & & \text{($S$ is a preimage of $Q$ under $\mu$)}\\
& \ge \frac{1-3\eps}{(1+\eps)^3} \sum_{x \in X} \min_{c \in C} \euclsq(\mu(x), \mu(c))  & & \text{(\cref{eq:Q})} \\
& \ge  \frac{(1-3\eps)(1-\eps)^2}{(1+\eps)^3} \cdot \nu(X,C) & & (\text{\cref{eq:nu}})\\
& \ge  (1 - 8\eps) \cdot \nu(X,C) & & \text{holds for $\eps \in (0, 1/2)$}.
\end{align*}
Hence, $(S, w_S)$ is indeed an $\Oh(\eps)$-coreset.
\end{proof}

\begin{lemma}\label{lem:integer-reweighting}
Let $(\X,d)$ be a metric space, let $(X,w)$ be a weighted set of $n$
points in $\X$, let $z,k$ be positive integers, and let
$\eps \in (0,1/2)$. Set $w_{\min}:=\min_{x\in X}w(x)$ and define an integer
weight function $w'\colon X\to\mathbb{Z}_{>0}$ by $w'(x):=\left\lfloor \frac{w(x)}{\eps\, w_{\min}}\right\rfloor$.
If $(S,\omega)$ is an $\eps$-coreset for
$(k,z)$-clustering of the integer-weighted set $(X,w')$, the rescaled set
$(S,\,\eps\, w_{\min}\,\omega)$ is a $2\eps$-coreset for
$(k,z)$-clustering of $(X,w)$.
\end{lemma}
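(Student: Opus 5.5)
The plan is to compare the two weight functions pointwise and then pass the comparison through the cost functional, which is linear in the weights. Set $\lambda := \eps\, w_{\min}$, so that $w'(x) = \lfloor w(x)/\lambda \rfloor$. For every $x\in X$ we have $w(x)/\lambda \ge 1/\eps > 2$, so $w'(x) \ge 1$ and $w'$ is indeed a positive integer weight. The basic sandwich is
\[
\frac{w(x)}{\lambda} - 1 \;<\; w'(x) \;\le\; \frac{w(x)}{\lambda}.
\]
Because $w(x)/\lambda \ge 1/\eps$, the additive loss of $1$ is at most an $\eps$-fraction of $w(x)/\lambda$. This gives the multiplicative bound
\[
(1-\eps)\,\frac{w(x)}{\lambda} \;\le\; w'(x) \;\le\; \frac{w(x)}{\lambda}
\]
for every $x \in X$.

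Next I fix an arbitrary center set $C \subseteq \X$ with $\absolute{C}\le k$. Since $\nu_{z,w}(X,C) = \sum_x w(x)\,\mathbb{d}(x,C)^z$ is linear in the weight function with nonnegative coefficients, summing the pointwise bound gives
\[
(1-\eps)\,\nu_{z,w}(X,C) \;\le\; \lambda\,\nu_{z,w'}(X,C) \;\le\; \nu_{z,w}(X,C).
\]
By the coreset hypothesis, $\nu_{z,\omega}(S,C) \in (1\pm\eps)\,\nu_{z,w'}(X,C)$. Scaling by $\lambda$ and using linearity again, $\nu_{z,\lambda\omega}(S,C) = \lambda\,\nu_{z,\omega}(S,C)$. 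Chaining the two estimates yields
\[
(1-\eps)^2\,\nu_{z,w}(X,C) \;\le\; \nu_{z,\lambda\omega}(S,C) \;\le\; (1+\eps)\,\nu_{z,w}(X,C).
\]

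The final step is the arithmetic. The upper bound $(1+\eps) \le 1+2\eps$ holds trivially. For the lower bound, $(1-\eps)^2 = 1-2\eps+\eps^2 \ge 1-2\eps$. Hence
\[
\absolute{\nu_{z,\lambda\omega}(S,C) - \nu_{z,w}(X,C)} \le 2\eps\,\nu_{z,w}(X,C)
\]
for every admissible $C$, which is exactly the definition of a $2\eps$-coreset for $(X,w)$.

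The only delicate point, and the main obstacle, is the lower bound on the rounding. It requires every $w(x)/\lambda$ to be at least $1/\eps$, and this is exactly why the normalization is by $w_{\min}$. The rest is linearity of $\nu$ in the weights, which holds uniformly over all $C$ and over any $z$.
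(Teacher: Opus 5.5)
Your proposal is correct and follows the paper's proof essentially step for step: the floor sandwich $(1-\eps)w(x)\le \eps w_{\min} w'(x)\le w(x)$, summing it via linearity of $\nu$ in the weights, chaining with the coreset guarantee for $(X,w')$, and finishing with $(1-\eps)^2\ge 1-2\eps$. You also spell out two points the paper leaves implicit: that the lower rounding bound uses $w(x)\ge w_{\min}$, and that $w'$ is a positive integer weight.
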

\begin{proof}
By definition of the floor, $\tfrac{w(x)}{\eps w_{\min}}-1< w'(x)\le
\tfrac{w(x)}{\eps w_{\min}}$, and hence
\[
  (1-\eps)\,w(x)\ \le\ \eps\, w_{\min}\,w'(x)\ \le\ w(x)
  \qquad\text{for all }x\in X.
\]
Fix any $C\subseteq\mathcal{X}$ with $|C|\le k$. Multiplying by
$(d(x,C))^z\ge 0$ and summing over $x\in X$ gives
\begin{equation}\label{eq:reweight-sandwich}
  (1-\eps)\,\nu_{z,w}(X,C)\ \le\ \eps\, w_{\min}\,\nu_{z,w'}(X,C)
  \ \le\ \nu_{z,w}(X,C)
\end{equation}

Now let $(S,\omega)$ be an $\eps$-coreset for $(k,z)$-clustering of
$(X,w')$, and write $\omega^{\ast}:=\eps w_{\min}\,\omega$, so that
$\nu_{z,\omega^{\ast}}(S,C)=\eps w_{\min}\,\nu_{z,\omega}(S,C)$. We have:
\begin{align*}
(1-\eps)\,\nu_{z,w'}(X,C)\ \le\ \nu_{z,\omega}(S,C)\ \le\
  (1+\eps) \nu_{z,w'}(X,C) \text{\quad (Def.~\ref{def:coreset})} \\
(1-\eps)\eps w_{\min} \nu_{z,w'}(X,C)\ \le \eps w_{\min} \nu_{z,\omega}(S,C)\ \le\ (1+\eps) \eps \, w_{\min} \nu_{z,w'}(X,C)  \\
(1-\eps)^2\nu_{z,w}(X,C)\ \le \nu_{z,\omega^\ast}(S,C)\ \le\
  (1+\eps)  \nu_{z,w}(X,C) \text{\quad (Eq.~\ref{eq:reweight-sandwich})}  \\
\end{align*}
Finally, $(1-\eps)^2\ge 1-2\eps$ and $1+\eps\le 1+2\eps$,
hence $\lvert\nu_{z,\omega^{\ast}}(S,C)-\nu_{z,w}(X,C)\rvert\le
2\eps\,\nu_{z,w}(X,C)$ for every such $C$. Thus
$(S,\omega^{\ast})$ is a $2\eps$-coreset for $(k,z)$-clustering of
$(X,w)$ by \cref{def:coreset}.
\end{proof}

\end{document}